\documentclass[letterpaper, 12pt]{article}[2000/05/19]
\usepackage[english]{babel}
\usepackage{amsfonts,amsmath,amssymb,amsthm,latexsym,amscd,mathrsfs}
\usepackage{ifthen,cite}
\usepackage[bookmarksnumbered=true]{hyperref}
\hypersetup{pdfpagetransition={Split}}

\newcommand{\evenhead}{Author \ name}
\newcommand{\oddhead}{Article \ name}
\newcommand{\theArticleName}{Article \ name}

\newcommand{\FirstPageHeading}[1]{\thispagestyle{empty}%
\noindent\raisebox{0pt}[0pt][0pt]{\makebox[\textwidth]{\protect\footnotesize \sf }}\par}

\newcommand{\ArticleName}[1]{\renewcommand{\theArticleName}{#1}\vspace{-2mm}\par\noindent {\LARGE\bf  #1\par}}
\newcommand{\Author}[1]{\vspace{5mm}\par\noindent {\Large  #1\par} \par\vspace{2mm}\par}
\newcommand{\Address}[1]{\vspace{2mm}\par\noindent {\it #1} \par}
\newcommand{\Email}[1]{\ifthenelse{\equal{#1}{}}{}{\par\noindent {\rm E-mail: }{\it  #1} \par}}
\newcommand{\URLaddress}[1]{\ifthenelse{\equal{#1}{}}{}{\par\noindent {\rm URL: }{\tt  #1} \par}}
\newcommand{\EmailD}[1]{\ifthenelse{\equal{#1}{}}{}{\par\noindent {$\phantom{\dag}$~\rm E-mail: }{\it  #1} \par}}
\newcommand{\URLaddressD}[1]{\ifthenelse{\equal{#1}{}}{}{\par\noindent {$\phantom{\dag}$~\rm URL: }{\tt  #1} \par}}

\newcommand{\Abstract}[1]{\vspace{6mm}\par\noindent\hspace*{10mm}
\parbox{140mm}{\small {\bf Abstract.} #1}\par}
\newcommand{\Keywords}[1]{\vspace{3mm}\par\noindent\hspace*{10mm}
\parbox{140mm}{\small {\bf Key words:} \rm #1}\par}
\newcommand{\Classification}[1]{\vspace{3mm}\par\noindent\hspace*{10mm}
\parbox{140mm}{\small {\it 2000 Mathematics Subject Classification:} \rm #1}\vspace{3mm}\par}
\newcommand{\ShortArticleName}[1]{\renewcommand{\oddhead}{#1}}
\newcommand{\AuthorNameForHeading}[1]{\renewcommand{\evenhead}{#1}}

\setlength{\textwidth}{170.0mm} \setlength{\textheight}{227.0mm} \setlength{\oddsidemargin}{0mm} \setlength{\evensidemargin}{0mm}
\setlength{\topmargin}{-8mm} \setlength{\parindent}{5.0mm}
\long\def\@makecaption#1#2{
  \sbox\@tempboxa{\small \textbf{#1.}\ \ #2}%
  \ifdim \wd\@tempboxa >\hsize
    {\small \textbf{#1.}\ \ #2}\par \else
    \global \@minipagefalse
    \hb@xt@\hsize{\hfil\box\@tempboxa\hfil}%
  \fi \vskip\belowcaptionskip}

\def\numberwithin#1#2{\@ifundefined{c@#1}{\@nocounterr{#1}}{%
  \@ifundefined{c@#2}{\@nocnterr{#2}}{%
  \@addtoreset{#1}{#2}%
  \toks@\@xp\@xp\@xp{\csname the#1\endcsname}%
  \@xp\xdef\csname the#1\endcsname
    {\@xp\@nx\csname the#2\endcsname.\the\toks@}}}}
\def\E^#1{{\buildrel #1 \over\vee}}
\newtheorem{theorem}{Theorem}

\newtheorem{proposition}{Proposition}
{\theoremstyle{definition} 

\newtheorem{remark}{Remark}
}

\begin{document}

\FirstPageHeading{V.I. Gerasimenko}

\ShortArticleName{Derivation of Quantum Kinetic Equations}

\AuthorNameForHeading{V.I. Gerasimenko, Z.A. Tsvir}

\ArticleName{Towards Rigorous Derivation\\ of Quantum Kinetic Equations}

\Author{V.I. Gerasimenko$^\ast$\footnote{E-mail: \emph{gerasym@imath.kiev.ua}}
        and Z.A. Tsvir$^\ast$$^\ast$\footnote{E-mail: \emph{Zhanna.Tsvir@simcorp.com}}}

\Address{$^\ast$\hspace*{2mm}Institute of Mathematics of NAS of Ukraine,\\
    \hspace*{4mm}3, Tereshchenkivs'ka Str.,\\
    \hspace*{4mm}01601, Kyiv-4, Ukraine}

\Address{$^\ast$$^\ast$Taras Shevchenko National University of Kyiv,\\
    \hspace*{4mm}Department of Mechanics and Mathematics,\\
    \hspace*{4mm}2, Academician Glushkov Av.,\\
    \hspace*{4mm}03187, Kyiv, Ukraine}

\bigskip

\Abstract{We develop a rigorous formalism for the description of the evolution of states
of quantum many-particle systems in terms of a one-particle density operator.
For initial states which are specified in terms of a one-particle density operator
the equivalence of the description of the evolution of quantum many-particle states
by the Cauchy problem of the quantum BBGKY hierarchy and by the Cauchy problem of
the generalized quantum kinetic equation together with a sequence of explicitly defined
functionals of a solution of stated kinetic equation is established in the space 
of trace class operators. The links of the specific quantum kinetic equations with 
the generalized quantum kinetic equation are discussed.}

\bigskip

\Keywords{quantum kinetic equation; quantum BBGKY hierarchy; cluster expansion; 
          cumulant of scattering operators; quantum many-particle system.}
\vspace{2pc}
\Classification{35Q40; 35Q82; 47J35; 82C10; 82C40.}

\makeatletter
\renewcommand{\@evenhead}{
\hspace*{-3pt}\raisebox{-15pt}[\headheight][0pt]{\vbox{\hbox to \textwidth {\thepage \hfil \evenhead}\vskip4pt \hrule}}}
\renewcommand{\@oddhead}{
\hspace*{-3pt}\raisebox{-15pt}[\headheight][0pt]{\vbox{\hbox to \textwidth {\oddhead \hfil \thepage}\vskip4pt\hrule}}}
\renewcommand{\@evenfoot}{}
\renewcommand{\@oddfoot}{}
\makeatother

\newpage
\vphantom{math}

\protect\tableofcontents
\vspace{0.7cm}

\section{Introduction}
It is well known that in certain situations the collective behavior of quantum many-particle systems can be adequately described
by the initial-value problem of the kinetic equation, i.e. by the evolution equation for a one-particle (marginal) density
operator \cite{BQ,CGP97}. To get an understanding of the nature of such phenomenon as the kinetic evolution it is necessary
to answer at least two fundamental questions. One is an origin of initial data for such
evolution equation or in other words what is the immediate cause that many-particle systems
tend to the state described in terms of a one-particle density operator in evolutionary process.
If initial data is completely specified by a one-particle density operator then the other fundamental question is the
derivation of quantum kinetic equations from microscopic dynamics, i.e. from the
von Neumann equation or the quantum BBGKY hierarchy. We note that the main problem herein is whether such intention
can be put on a firm mathematical foundation. In the paper we consider the second question, i.e. the problem of rigorous
derivation of quantum kinetic equations from underlying many-particle dynamics.

First attempt to justify the kinetic equations was undertaken by N.N. Bogolyubov on basis of the perturbation method of construction
of a particular solution of the hierarchy of equations for marginal distribution functions \cite{B46}
(in quantum case in the paper \cite{B47}). Later, drawing an analogy with the equilibrium state expansions,
such approach for classical system of particles was developed in papers of M.S. Green \cite{GM}, M.S. Green and R.A. Piccirelly
\cite{GP,PR} and in series of E.G.D. Cohen papers, summed up in the review \cite{C68} (see also \cite{CE}).
The current view of this problem consists in the following \cite{CIP}.
Since the evolution of states of infinitely many quantum particles is generally described by a sequence of marginal density operators
which is a solution of the initial-value problem of the quantum BBGKY hierarchy, then the evolution of states can be effectively
described by a one-particle density operator governed by the kinetic equation only as a result of some approximations
or in a suitable scaling limit \cite{GH,Sp80,Sp91,Sp07}.
Recently in the framework of such approach the considerable advance in the rigorous derivation
of quantum kinetic equations, namely, the nonlinear Schr\"{o}dinger equation \cite{AA,AGT,BGGM1,ESchY2,FL,M1,ESY10,ESch,GMM,GerUJP}
and the quantum Boltzmann equation \cite{BCEP3,ESY}, is observed.

In the paper we discuss the problem of potentialities inherent in the description of the evolution of
states of many-particle systems in terms of a one-particle density operator.
We demonstrate that in fact if initial data is completely defined by a one-particle marginal density operator, then
all possible states of infinite-particle systems at arbitrary moment
of time can be described within the framework of a one-particle density operator without any approximations.

Now we outline the structure of the paper and the main results.
At first in Section 2 we formulate some definitions and preliminary facts about quantum dynamics of finitely many particles.
Then the main results related to the origin of kinetic evolution is stated.
For initial data specified in terms of trace class operators satisfying a chaos property in case of the
Maxwell-Boltzmann statistics we prove that the Cauchy problem of the quantum BBGKY hierarchy
can be reformulated as a new Cauchy problem for the certain evolution equation for a one-particle marginal density operator
(generalized quantum kinetic equation) and an infinite sequence of explicitly
defined functionals of the solution of this evolution equation which characterizes the correlations of particle states.
In Section 3 we prove the main results, namely, we develop the method of the kinetic cluster expansions of the
cumulants of scattering operators which define the evolution operators of every term of the marginal functional expansions
over the products of a one-particle density operator and derive the generalized quantum kinetic equation.
In Section 4 a solution of the Cauchy problem of the generalized quantum kinetic equation is constructed and
the existence of a strong and a weak solution is proved in the space of trace class operators.
Finally in Section 5 we conclude with some observations and perspectives for future research. Among them we discuss
the problem of the derivation of the specific quantum kinetic equations such as the nonlinear Schr\"{o}dinger equation,
from the constructed generalized quantum kinetic equation in the appropriate scaling limits.
In particular the mean-field scaling limit of a solution of the Cauchy problem
of the generalized quantum kinetic equation and the marginal functionals of the state holds up.

\section{Origin of kinetic evolution}

\subsection{The evolution of many-particle systems: the quantum BBGKY hierarchy}
Hereinafter we consider a quantum system of a non-fixed (i.e. arbitrary but finite) number of the identical (spinless)
particles with unit mass $m=1$ in the space $\mathbb{R}^{\nu},$ $\nu\geq 1$.
The Hamiltonian $H={\bigoplus\limits}_{n=0}^{\infty}H_{n}$ of such system is a self-adjoint operator ($H_{0}=0$)
with the domain $\mathcal{D}(H)=\{\psi=\oplus\psi_{n}\in{\mathcal{F}_{\mathcal{H}}}\mid \psi_{n}\in\mathcal{D}
(H_{n})\in\mathcal{H}^{\otimes n},{\sum\limits}_{n}\|H_{n}\psi_{n}\|^{2}<\infty\}\subset{\mathcal{F}_{\mathcal{H}}},$
where $\mathcal{F}_{\mathcal{H}}={\bigoplus\limits}_{n=0}^{\infty}\mathcal{H}^{\otimes n}$ is the Fock space over
the Hilbert space $\mathcal{H}$. We adopt the usual convention that $\mathcal{H}^{\otimes 0}=\mathbb{C}$.
Assume $\mathcal{H}=L^{2}(\mathbb{R}^\nu)$\,(coordinate representation), then an element $\psi\in\mathcal{F}_{\mathcal{H}}
={\bigoplus\limits}_{n=0}^{\infty}L^{2}(\mathbb{R}^{\nu n})$ is a sequence of functions
$\psi=\big(\psi_0,\psi_{1}(q_1),\ldots,\psi_{n}(q_1,\ldots,q_{n}),\ldots\big)$
such that $\|\psi\|^{2}=|\psi_0|^{2}+\sum_{n=1}^{\infty}\int dq_1\ldots dq_{n}|\psi_{n}(q_1,\ldots,$ $q_{n})|^{2}<+\infty$.
On the subspace of infinitely differentiable functions with compact supports
$\psi_n\in L^{2}_0(\mathbb{R}^{\nu n})\subset L^{2}(\mathbb{R}^{\nu n})$ the
 Hamiltonian $H_{n}$ of $n\geq1$ particles acts according to the formula
\begin{eqnarray}\label{H_Zag}
    &&H_{n}\psi_n = -\frac{\hbar^{2}}{2}\sum\limits_{i=1}^{n}\Delta_{q_i}\psi_n
         +\sum\limits_{i_{1}<i_{2}=1}^{n}\Phi(q_{i_{1}},q_{i_{2}})\psi_{n},
\end{eqnarray}
where $h={2\pi\hbar}$ is a Planck constant, $\Phi$ is a two-body interaction potential satisfying
the Kato conditions \cite{Pe95,Kato}.

The states of finitely many quantum particles belong to the space
$\mathfrak{L}^{1}(\mathcal{F}_\mathcal{H})={\bigoplus\limits}_{n=0}^{\infty}\mathfrak{L}^{1}(\mathcal{H}_{n})$
of the sequences $f=(f_0,f_{1},\ldots,f_{n},\ldots)$ of trace class operators
$f_{n}\equiv f_{n}(1,\ldots,n)\in\mathfrak{L}^{1}(\mathcal{H}_{n})$ and $f_0 \in \mathbb{C}$,
that satisfy the symmetry condition: $f_{n}(1,\ldots,n)=f_{n}(i_{1},\ldots,i_{n})$
for arbitrary $(i_{1},\ldots,i_{n})\in(1,\ldots,n)$, equipped with the norm
\begin{eqnarray*}
    &&\|f\|_{\mathfrak{L}^{1} (\mathcal{F}_\mathcal{H})}=
          \sum\limits_{n=0}^{\infty} \|f_{n}\|_{\mathfrak{L}^{1}(\mathcal{H}_{n})}=
          \sum\limits_{n=0}^{\infty}\mathrm{Tr}_{1,\ldots,n}|f_{n}(1,\ldots,n)|,
\end{eqnarray*}
where $\mathrm{Tr}_{1,\ldots,n}$ are partial traces over $1,\ldots,n$ particles \cite{DauL_5}.
We denote by $\mathfrak{L}^{1}_0(\mathcal{F}_\mathcal{H})=
{\bigoplus\limits}_{n=0}^{\infty}\mathfrak{L}^{1}_0(\mathcal{H}_{n})$ the everywhere dense set
of finite sequences of degenerate operators  with infinitely differentiable kernels
with compact supports \cite{Kato,DauL_5}.

The evolution of states is described by the sequences $F(t)=(F_{1}(t,1),\ldots,F_{s}(t,1,\ldots,s),\ldots)$
of the marginal density operators that satisfy the Cauchy problem of the quantum BBGKY hierarchy
\begin{eqnarray} \label{1}
     &&\frac{d}{dt}F_{s}(t,Y)=-\mathcal{N}_{s}(Y)F_{s}(t,Y) +
           \sum\limits_{i=1}^{s}\,\mathrm{Tr}_{\mathrm{s+1}}
           \big(-\mathcal{N}_{\mathrm{int}}(i,s+1)\big)F_{s+1}(t,Y, s+1),\\ \nonumber\\
     &&F_{s}(t)|_{t=0}=F_{s}^0,\quad s\geq 1,\nonumber
\end{eqnarray}
where $Y\equiv(1,\ldots,s)$, the operator $\mathcal{N}_{s}$
is defined on $\mathfrak{L}^{1}_0(\mathcal{H}_s)$ as follows
\begin{eqnarray}\label{com}
    &&\mathcal{N}_{s}f_s\doteq-\frac{i}{\hbar}\big(f_s\,H_s - H_s\,f_s\big)
\end{eqnarray}
and correspondingly
\begin{eqnarray}\label{cint}
    &&\mathcal{N}_{\mathrm{int}}(i,j)f_s\doteq-\frac{i}{\hbar}\big(f_s\,\Phi(i,j)-\Phi(i,j)\,f_s\big).
\end{eqnarray}

Hereinafter we consider initial data satisfying the factorization property or a "chaos" property \cite{CGP97},
which means the lack of correlations at initial time. For a system of identical particles, obeying the
Maxwell-Boltzmann statistics, we have
\begin{eqnarray}\label{h2}
    &&F(t)|_{t=0}= F^{(c)}\equiv \big(F_1^0(1),\ldots,\prod_{i=1}^s F_1^0(i),\ldots\big).
\end{eqnarray}
The assumption about initial data is intrinsic for the kinetic
description of a gas, because in this case all possible states are
characterized only by a one-particle density operator.

On the space $\mathfrak{L}^{1}(\mathcal{H}_{n})$ we define the group of operators
\begin{eqnarray}\label{groupG}
    &&\mathcal{G}_{n}(-t)f_{n}\,\doteq e^{-{\frac{i}{\hbar}}t H_{n}}\,f_{n}\, e^{{\frac{i}{\hbar}}t H_{n}}.
\end{eqnarray}
On the space $\mathfrak{L}^{1}(\mathcal{H}_{n})$
the mapping: $t\rightarrow\mathcal{G}_{n}(-t)f_n$
is an isometric strongly continuous group which preserves positivity and self-adjointness of operators \cite{BR,GOT}.
For $f_n\in\mathfrak{L}^{1}_{0}(\mathcal{H}_{n})$ there exists a limit in the sense of a strong convergence by which
the infinitesimal generator of the group
of evolution operators (\ref{groupG}) is determined as follows
\begin{eqnarray}\label{infOper}
    &&\lim\limits_{t\rightarrow 0}\frac{1}{t}(\mathcal{G}_n(-t)f_n-f_n)=-\mathcal{N}_n f_n,
\end{eqnarray}
where the operator $(-\mathcal{N}_n)$ is defined by formula (\ref{com}) and
the operator $(-\mathcal{N}_n) f_n$ is defined on the domain $\mathcal{D}(H_n)\subset\mathcal{H}_{n}$.

A solution of the quantum BBGKY hierarchy (\ref{1}) with initial data (\ref{h2}) is represented by the expansion
\begin{eqnarray}\label{srBBGKY}
    &&F_{s}(t,Y)=\sum\limits_{n=0}^{\infty}\frac{1}{n!}\,\mathrm{Tr}_{s+1,\ldots,s+n}\,\,
       \mathfrak{A}_{1+n}(t,\{Y\},s+1,\ldots,s+n)\prod_{i=1}^{s+n} F_1^0(i),
\end{eqnarray}
where the evolution operator $\mathfrak{A}_{1+n}(t)$, $n \geq 0$,
is the $(n+1)$-order cumulant \cite{GOT} of the groups of operators (\ref{groupG})
\begin{eqnarray*}\label{ch}
   &&\mathfrak{A}_{1+n}(t,\{Y\},X\setminus Y )\doteq\sum\limits_{\mathrm{P}:(\{Y\},\,X\setminus Y)
         ={\bigcup\limits}_i X_i}(-1)^{|\mathrm{P}|-1}(|\mathrm{P}|-1)!
         \prod_{X_i\subset \mathrm{P}}\mathcal{G}_{|X_i|}(-t,X_i),
\end{eqnarray*}
and the following notation are used : $\{Y\}$ is the set consisting of one element $Y=(1,\ldots,s)$, i.e. $|\{Y\}|=1$,
$\sum_\mathrm{P}$ is the sum over all possible partitions of the set $(\{Y\},X\setminus Y)\equiv(\{Y\},s+1,\ldots,s+n)$
into $|\mathrm{P}|$ nonempty mutually disjoint subsets $ X_i\subset (\{Y\},X\setminus Y)$.
If $\|F_1^0\|_{\mathfrak{L}^{1}(\mathcal{H}_{1})}<e^{-1}$, series (\ref{srBBGKY}) converges in the norm
of the space $\mathfrak{L}^{1}(\mathcal{H}_{s})$ for arbitrary $t\in\mathbb{R}^1$.

Hereinafter in the capacity of a solution expansion of the quantum BBGKY hierarchy we will use
its equivalent representation in the space $\mathfrak{L}^{1}(\mathcal{F}_\mathcal{H})$,
namely expansion (\ref{srBBGKY}) with the $(n+1)$-order reduced cumulant \cite{Pe95},\cite{GOT}
of the groups of operators (\ref{groupG})
\begin{eqnarray}\label{ch}
   &&\mathfrak{A}_{1+n}(t,\{Y\},s+1,\ldots,s+n)=\sum\limits_{k=0}^{n}(-1)^k\,
       \frac{n!}{k!(n-k)!}\,\mathcal{G}_{s+n-k}(-t).
\end{eqnarray}
If $F_1^0\in\mathfrak{L}^{1}(\mathcal{H}_{1})$, in this case series (\ref{srBBGKY}) converges in the norm of the
space $\mathfrak{L}^{1}(\mathcal{H}_{s})$ for arbitrary $t\in\mathbb{R}^1$ and the estimate holds
\begin{eqnarray}\label{essrBBGKY}
    &&\|F_s(t)\|_{\mathfrak{L}^{1} (\mathcal{H}_{s})}\leq \|F_1^0\|_{\mathfrak{L}^{1} (\mathcal{H}_{1})}^s
       \exp\big({2} \|F_1^0\|_{\mathfrak{L}^{1} (\mathcal{H}_{1})}\big), \quad s\geq 1.
\end{eqnarray}

Thus, in case of initial data (\ref{h2}) the microscopic evolution of states of quantum many-particle
systems is described by sequence (\ref{srBBGKY}),(\ref{ch}).
In the next subsection we formulate the evolution of states in terms of the kinetic theory.

\subsection{The kinetic evolution: main results}
Since we consider initial data (\ref{h2}) which are completely characterized by the one-particle density operator $F_{1}^0$,
namely, $F^{(c)}=\big(F_{1}^0(1),\ldots,\prod_{i=1}^s F_{1}^0(i),\ldots\big)$, the initial-value problem of the quantum
BBGKY hierarchy (\ref{1})-(\ref{h2}) is not completely well-defined Cauchy problem, because the generic initial data is not
independent for every unknown  operator $F_{s}(t,1,\ldots,s),\, s\geq 1$, in the hierarchy of equations.
Thus, it naturally arises the opportunity of reformulating such initial-value problem as a new Cauchy problem for
operator $ F_{1}(t)$ with the independent initial data together with explicitly defined functionals
$F_{s}\big(t,1,\ldots,s \mid F_{1}(t)\big),\,s\geq 2$, of the solution $ F_{1}(t)$ of this Cauchy problem.
We refer to such functionals as the marginal functionals of the state of quantum many-particle systems. At first
we define the restated Cauchy problem.

Functionals $F_{s}\big(t,1,\ldots,s \mid F_{1}(t)\big),\,s\geq 2$, are represented by the following expansions
over products of the one-particle density operator $F_{1}(t)$
\begin{eqnarray}\label{f}
    &&F_{s}\big(t,Y\mid F_{1}(t)\big)\doteq\sum _{n=0}^{\infty }\frac{1}{n!}\,\mathrm{Tr}_{s+1,\ldots,{s+n}}\,
        \mathfrak{V}_{1+n}\big(t,\{Y\},s+1,\ldots,s+n\big)\prod _{i=1}^{s+n} F_{1}(t,i),
\end{eqnarray}
where the $(n+1)$-order evolution operator $\mathfrak{V}_{1+n}(t),\,n\geq0$, are defined as follows
\begin{eqnarray}\label{skrr}
    && \mathfrak{V}_{1+n}(t,\{Y\},X\setminus Y )\doteq n!\,\sum_{k=0}^{n}\,(-1)^k\,\sum_{n_1=1}^{n} \ldots
        \sum_{n_k=1}^{n-n_1-\ldots-n_{k-1}}\frac{1}{(n-n_1-\ldots-n_k)!}\times\\
    &&\times\widehat{\mathfrak{A}}_{1+n-n_1-\ldots-n_k}(t,\{Y\},s+1,\ldots,s+n-n_1-\ldots-n_k)\times\nonumber\\
    &&\times\prod_{j=1}^k\,\sum\limits_{\mbox{\scriptsize $\begin{array}{c}\mathrm{D}_{j}:Z_j=\bigcup_{l_j} X_{l_j},\\
        |\mathrm{D}_{j}|\leq s+n-n_1-\dots-n_j\end{array}$}}\frac{1}{|\mathrm{D}_{j}|!}
        \sum_{i_1\neq\ldots\neq i_{|\mathrm{D}_{j}|}=1}^{s+n-n_1-\ldots-n_j}\,\,
        \prod_{X_{l_j}\subset \mathrm{D}_{j}}\,\frac{1}{|X_{l_j}|!}\,\,
        \widehat{\mathfrak{A}}_{1+|X_{l_j}|}(t,i_{l_j},X_{l_j}),\nonumber
\end{eqnarray}
and $\sum_{\mathrm{D}_{j}:Z_j=\bigcup_{l_j} X_{l_j}}$ is the sum over all possible
dissections\footnote{The dissection $\mathrm{D}$ of the linearly ordered set
$(1,\ldots,n)$ is its partition on connected subsets, $|\mathrm{D}|$ is
the number of subsets of the dissection $\mathrm{D}$. The total number of dissections of an $n$-elements set is $2^{n-1}$.
For example, the set (1,2,3) has four dissections: (1,2,3); ((1),(2,3)); ((1,2),(3)); ((1),(2),(3)).}
$\mathrm{D}_{j}$ of the linearly ordered set
$Z_j\equiv(s+n-n_1-\ldots-n_j+1,\ldots,s+n-n_1-\ldots-n_{j-1})$ on no more than $s+n-n_1-\ldots-n_j$ linearly ordered subsets.
In (\ref{skrr}) we denote by $\widehat{\mathfrak{A}}_{1+n}(t)$ the $(1+n)$-order reduced cumulant, i.e.
\begin{eqnarray*}
   &&\widehat{\mathfrak{A}}_{1+n}(t,\{Y\},s+1,\ldots,s+n)
      =\sum\limits_{k=0}^{n}(-1)^{k}\,\frac{n!}{k! (n-k)!}\,\widehat{\mathcal{G}}_{s+n-k}(t),
\end{eqnarray*}
of the following groups of scattering operators
\begin{eqnarray}\label{so}
   &&\widehat{\mathcal{G}}_{n}(t)=\mathcal{G}_{n}(-t,1,\ldots,n)
      \prod _{i=1}^{n}\mathcal{G}_{1}(t,i), \quad n\geq1.
\end{eqnarray}

We give below for later use a few examples of the evolution operators $\mathfrak{V}_{n}$, $n\geq1$,
of the lower orders:
\begin{eqnarray}\label{rrrls}
   &&\mathfrak{V}_{1}(t,\{Y\})=\widehat{\mathfrak{A}}_{1}(t,\{Y\}),\\
   &&\mathfrak{V}_{2}(t,\{Y\},s+1)=\widehat{\mathfrak{A}}_{2}(t,\{Y\},s+1)-
       \widehat{\mathfrak{A}}_{1}(t,\{Y\})\sum_{i_1=1}^s \widehat{\mathfrak{A}}_{2}(t,i_1,s+1),\nonumber\\
   &&\mathfrak{V}_{3}(t,\{Y\},s+1,s+2)=\widehat{\mathfrak{A}}_{3}(t,\{Y\},s+1,s+2)
       -2!\,\widehat{\mathfrak{A}}_{2}(t,\{Y\},s+1)\times\nonumber\\
   &&\times\sum_{i_1=1}^{s+1}\widehat{\mathfrak{A}}_{2}(t,i_1,s+2)-
       \widehat{\mathfrak{A}}_{1}(t,\{Y\})\big(\sum_{i_1=1}^{s}\widehat{\mathfrak{A}}_{3}(t,i_1,s+1,s+2)-\nonumber\\
   &&-2!\sum_{i_1=1}^{s}\widehat{\mathfrak{A}}_{2}(t,i_1,s+1)\sum_{i_2=1}^{s+1}\widehat{\mathfrak{A}}_{2}(t,i_2,s+2)+
       \sum_{i_1\neq i_2=1}^{s}\widehat{\mathfrak{A}}_{2}(t,i_1,s+1)\widehat{\mathfrak{A}}_{2}(t,i_2,s+2)\big).\nonumber
\end{eqnarray}
In terms of groups of scattering operators (\ref{so}), evolution operators (\ref{rrrls}) are represented:
\begin{eqnarray*}
  &&\mathfrak{V}_{1}(t,\{Y\})=\widehat{\mathcal{G}}_{s}(t,1,\ldots,s),\\
  &&\mathfrak{V}_{2}(t,\{Y\},s+1)=\widehat{\mathcal{G}}_{s+1}(t,1,\ldots,s+1)-\widehat{\mathcal{G}}_{s}(t,1,\ldots,s)
     \sum_{i=1}^{s}\widehat{\mathcal{G}}_{2}(t,i,s+1)+\\
  &&+(s-1)\widehat{\mathcal{G}}_{s}(t,1,\ldots,s).
\end{eqnarray*}
In what follows it will be clear that functionals (\ref{f}) characterize
the correlations of quantum many-particle states.

The one-particle density operator $F_{1}(t)$  is a solution of the following initial-value problem
\begin{eqnarray}
  \label{gke}
    &&\frac{d}{dt}F_{1}(t,1)=-\mathcal{N}_{1}(1)F_{1}(t,1)+\\
    &&+\mathrm{Tr}_{2}\big(-\mathcal{N}_{\mathrm{int}}(1,2)\big)
        \sum\limits_{n=0}^{\infty}\frac{1}{n!}\mathrm{Tr}_{3,\ldots,n+2}
        \mathfrak{V}_{1+n}\big(t,\{1,2\},3,\ldots,n+2\big)\prod _{i=1}^{n+2} F_{1}(t,i),\nonumber\\ \nonumber\\
  \label{2}
    &&F_1(t,1)|_{t=0}= F_1^0(1),
\end{eqnarray}
where the evolution operator $\mathfrak{V}_{1+n}(t)$ is defined by formula (\ref{skrr}). We refer to evolution equation
(\ref{gke}) as the generalized quantum kinetic equation. For systems of classical particles such equation was formulated
in \cite{GP97,GP98,CGP97} and for discrete velocity models in \cite{BGL}.

We observe that the kinetic evolution is described in terms of cumulants of scattering operators (\ref{so})
in contrast to the evolution of states described by the BBGKY hierarchy (\ref{1}).

Thus, the principle of equivalence of initial-value problems (\ref{gke})-(\ref{2}) and (\ref{1}),(\ref{h2}) is true.
\begin{proposition}
In the space $\mathfrak{L}^{1}(\mathcal{F}_\mathcal{H})$ under the condition $\|F_1^0\|_{\mathfrak{L}^{1}(\mathcal{H})}<e^{-2}$
the initial-value problem of the quantum BBGKY hierarchy (\ref{1}),(\ref{h2}) is equivalent to the initial-value problem
of the generalized quantum kinetic equation (\ref{gke}),(\ref{2}) together with the sequence of marginal functionals of the
state $F_{s}\big(t \mid F_{1}(t)\big),\, s\geq 2$, defined by expansions (\ref{f}).
\end{proposition}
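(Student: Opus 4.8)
The plan is to prove the equivalence in both directions, both resting on a single combinatorial device, the \emph{kinetic cluster expansion}, which expresses the reduced cumulants $\widehat{\mathfrak{A}}_{1+n}(t)$ of the scattering operators (\ref{so}) through the evolution operators $\mathfrak{V}_{1+k}(t)$ of (\ref{skrr}) of orders $k\leq n$, and conversely. First I would verify that the one-particle operator $F_{1}(t)$ defined by expansion (\ref{srBBGKY}),(\ref{ch}) with $s=1$ is a solution of the generalized quantum kinetic equation (\ref{gke}),(\ref{2}). Differentiating (\ref{srBBGKY}) term by term --- legitimate because estimate (\ref{essrBBGKY}) makes the series and its formal $t$-derivative norm-convergent in $\mathfrak{L}^{1}(\mathcal{H}_{1})$ --- and using the quantum BBGKY hierarchy (\ref{1}) for $s=1$, the right-hand side reduces to $-\mathcal{N}_{1}(1)F_{1}(t,1)+\mathrm{Tr}_{2}\big(-\mathcal{N}_{\mathrm{int}}(1,2)\big)F_{2}(t,1,2)$; it then remains to identify the BBGKY expansion of $F_{2}(t,1,2)$ with the marginal functional $F_{2}(t,1,2\mid F_{1}(t))$ of (\ref{f}), which is the $s=2$ case of the identity treated next.

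The core step is the identity, for every $s\geq1$,
\begin{eqnarray*}
  &&\sum_{n=0}^{\infty}\frac{1}{n!}\,\mathrm{Tr}_{s+1,\ldots,s+n}\,
      \mathfrak{A}_{1+n}(t,\{Y\},s+1,\ldots,s+n)\prod_{i=1}^{s+n}F_{1}^{0}(i)\\
  &&\qquad\qquad=\sum_{n=0}^{\infty}\frac{1}{n!}\,\mathrm{Tr}_{s+1,\ldots,s+n}\,
      \mathfrak{V}_{1+n}(t,\{Y\},s+1,\ldots,s+n)\prod_{i=1}^{s+n}F_{1}(t,i),
\end{eqnarray*}
i.e.\ that the functionals (\ref{f}) evaluated at the BBGKY one-particle operator $F_{1}(t)$ reproduce the marginal density operators $F_{s}(t)$ of (\ref{srBBGKY}). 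I would prove it by first using the group identity $\mathcal{G}_{m}(-t,X)=\widehat{\mathcal{G}}_{m}(t,X)\prod_{i\in X}\mathcal{G}_{1}(-t,i)$ to rewrite every cumulant $\mathfrak{A}_{1+n}$ on the left through the scattering cumulants $\widehat{\mathfrak{A}}_{1+n}$, so that $F_{s}(t,Y)=\sum_{n\geq0}\frac{1}{n!}\mathrm{Tr}_{s+1,\ldots,s+n}\widehat{\mathfrak{A}}_{1+n}(t,\{Y\},s+1,\ldots,s+n)\prod_{i=1}^{s+n}\mathcal{G}_{1}(-t,i)F_{1}^{0}(i)$; then, inverting the $s=1$ instance of this relation to express each one-particle factor through $F_{1}(t)$ and re-expanding, the coefficients that arise are exactly the nested sums over dissections $\mathrm{D}_{j}$ and scattering cumulants appearing in (\ref{skrr}). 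Equivalently, one establishes and solves a recurrence relating $\widehat{\mathfrak{A}}_{1+n}$ to the lower-order $\mathfrak{V}_{1+k}$, which when solved for $\mathfrak{V}$ returns formula (\ref{skrr}) --- this is the kinetic cluster expansion and it is the main obstacle of the proof: all the combinatorics is concentrated here, while the analytic part is comparatively routine. The low-order formulas (\ref{rrrls}) serve as a consistency check.

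For the converse I would start from a solution $F_{1}(t)$ of (\ref{gke}),(\ref{2}), set $F_{s}(t):=F_{s}(t\mid F_{1}(t))$ by (\ref{f}), and show that $F(t)=(F_{1}(t),F_{2}(t),\ldots)$ solves (\ref{1}),(\ref{h2}). At $t=0$ the scattering groups (\ref{so}) equal the identity, hence $\widehat{\mathfrak{A}}_{1}(0)=\mathrm{I}$ and $\widehat{\mathfrak{A}}_{1+n}(0)=0$ for $n\geq1$, so by (\ref{skrr}) $\mathfrak{V}_{1}(0)=\mathrm{I}$ and $\mathfrak{V}_{1+n}(0)=0$ for $n\geq1$, which gives $F_{s}(0)=\prod_{i=1}^{s}F_{1}^{0}(i)$ --- precisely the chaos data (\ref{h2}). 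Differentiating (\ref{f}) in $t$, substituting (\ref{gke}) for $\frac{d}{dt}F_{1}(t,i)$, and invoking once more the cluster-expansion recurrence (now read as a coupling between $\mathfrak{V}_{1+n}$ and $\mathfrak{V}_{1+n+1}$ that reconstructs the interaction term of the hierarchy) yields (\ref{1}). It remains to justify the manipulations: from (\ref{essrBBGKY}) with $s=1$ one has $\|F_{1}(t)\|_{\mathfrak{L}^{1}(\mathcal{H}_{1})}\leq\|F_{1}^{0}\|_{\mathfrak{L}^{1}(\mathcal{H})}\exp\big(2\|F_{1}^{0}\|_{\mathfrak{L}^{1}(\mathcal{H})}\big)$, and since the scattering operators (\ref{so}) are isometric and the number of terms in (\ref{skrr}) grows at most like $n!\,c^{n}$, a direct norm estimate shows that the smallness assumption $\|F_{1}^{0}\|_{\mathfrak{L}^{1}(\mathcal{H})}<e^{-2}$ makes every series occurring in (\ref{f}) and on the right-hand side of (\ref{gke}) converge in $\mathfrak{L}^{1}(\mathcal{H}_{s})$, uniformly on compact intervals of $t$, which also legitimizes the term-by-term differentiations used above.
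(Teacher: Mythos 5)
Your proposal follows essentially the same route as the paper: the whole argument is made to rest on the kinetic cluster expansion relating the reduced cumulants $\widehat{\mathfrak{A}}_{1+n}(t)$ of the scattering operators to the evolution operators $\mathfrak{V}_{1+n}(t)$ of (\ref{skrr}), combined with the inversion of the $s=1$ solution expansion to express $F_{1}^{0}$ through $F_{1}(t)$ (which the paper carries out via the contraction mapping principle, yielding precisely the threshold $e^{-2}$) and the identity (\ref{prod2}); the derivation of (\ref{gke}) by term-by-term differentiation of the $s=1$ expansion and the convergence justification under $\|F_{1}^{0}\|_{\mathfrak{L}^{1}(\mathcal{H})}<e^{-2}$ also match Sections 3.1--3.3 and 4. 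You correctly single out the combinatorial recurrence behind (\ref{skrr}) as the crux, which is exactly where the paper concentrates its Theorem 1.
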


The proof of the equivalence proposition is the subject of next section.

It should be noted that the possibility for the corresponding initial data to describe the
evolution of states only within the framework of a one-particle density operator without any approximations
is an inherent property of infinite-particle dynamics.

\begin{remark}
We illustrate the possibility of reformulating of initial-value problem of a hierarchy of evolution equations
in case of depending initial data as a new Cauchy problem for the certain evolution equation
together with explicitly defined functionals of a solution of this Cauchy problem
by the example of the quantum Vlasov hierarchy \cite{Sp80,AA}
\begin{eqnarray}\label{BBGKYlim}
     &&\frac{\partial}{\partial t}f_{s}(t)=\sum\limits_{i=1}^{s}\big(-\mathcal{N}_{1}(i)\big)f_{s}(t)+
         \sum\limits_{i=1}^{s}\mathrm{Tr}_{\mathrm{s+1}}\big(-\mathcal{N}_{\mathrm{int}}(i,s+1)\big)f_{s+1}(t),\\
   \label{BBGKYlim0}
     &&f_{s}(t,1,\ldots,s)|_{t=0}=\prod \limits_{j=1}^{s}f_{1}^0(j),\quad s\geq 1.
\end{eqnarray}
The Cauchy problem (\ref{BBGKYlim})-(\ref{BBGKYlim0}) is equivalent to the Cauchy problem
of the Vlasov quantum kinetic equation
\begin{eqnarray}\label{Vlasov1}
     &&\frac{\partial}{\partial t}f_{1}(t,1)=-\mathcal{N}_{1}(1)f_{1}(t,1)+
       \mathrm{Tr}_{2}\big(-\mathcal{N}_{\mathrm{int}}(1,2)\big)f_{1}(t,1)f_{1}(t,2),\\
   \label{Vlasov2}
     &&f_{1}(t)|_{t=0}=f_{1}^0.
\end{eqnarray}
and a sequence of functionals $f_{s}\big(t,1,\ldots,s \mid f_{1}(t)\big),\, s\geq 2$, defined by the expressions
\begin{eqnarray*}\label{fchaos}
     &&f_{s}\big(t,1,\ldots,s \mid f_{1}(t)\big)=\prod \limits_{j=1}^{s}f_{1}(t,j).
\end{eqnarray*}
The structure of these functionals is usually interpreted as such that the quantum Vlasov hierarchy (\ref{BBGKYlim})
preserves chaos property (\ref{BBGKYlim0}) in time for particles obeying Maxwell-Boltzmann statistics.
\end{remark}

\section{Kinetic evolution of quantum many-particle systems}

\subsection{Marginal functionals of the state}
The straightforward procedure to construct marginal functionals of the state (\ref{f}) consists in
the elimination from expressions of the quantum BBGKY hierarchy solution (\ref{srBBGKY}),(\ref{ch}) for $s=1$ and $s\geq2$
the initial one-particle density operator $F_{1}^0$. With this aim we express the operator $F_{1}^0$ in terms of
the operator $F_{1}(t)$ from expansion (\ref{srBBGKY}),(\ref{ch}) for $s=1$ applying the contraction mapping principle.

In view of expression (\ref{srBBGKY}) for $s=1$ in the space $\mathfrak{L}^{1}(\mathcal{H})$ we have the following
equation for the determination of initial one-particle density operator via the operator $F_1(t)$:
\begin{eqnarray*}\label{eqa}
   &&f=\mathcal{A}(f),
\end{eqnarray*}
where in the space $\mathfrak{L}^{1}(\mathcal{H})$ the nonlinear mapping $\mathcal{A}$ acts according to the formula
\begin{eqnarray}\label{opera}
   &&(\mathcal{A}(f))(1)\doteq f^0-\sum_{n=1}^{\infty}\,\frac{1}{n!}\,
      \mathrm{Tr}_{2,..,n+1}\,\mathcal{G}_1(t,1)\mathfrak{A}_{1+n}(t)\prod_{i=1}^{n+1} f(i),
\end{eqnarray}
and we denote: $f^0\equiv \mathcal{G}_1(t,1)F_1(t,1)$.

Let us find a condition under which nonlinear mapping (\ref{opera}) is a contraction mapping.
Let $f_1$ and $f_2$ are arbitrary elements from the space $\mathfrak{L}^{1}(\mathcal{H})$,
then according to definition (\ref{opera}) we obtain the estimate
\begin{eqnarray*}
   &&\|\mathcal{A}(f_1)-\mathcal{A}(f_2)\|_{\mathfrak{L}^{1}(\mathcal{H})}\leq\sum_{n=1}^{\infty}\frac{2^n}{n!}(n+1)
      (\|f\|_{\mathfrak{L}^{1}(\mathcal{H})})^n\|f_1-f_2\|_{\mathfrak{L}^{1}(\mathcal{H})}=\\
   &&=(e^{2\,\|f\|_{\mathfrak{L}^{1}(\mathcal{H})}}(2\,\|f\|_{\mathfrak{L}^{1}(\mathcal{H})}+1)-1)
      \|f_1-f_2\|_{\mathfrak{L}^{1}(\mathcal{H})},
\end{eqnarray*}
where $\|f\|_{\mathfrak{L}^{1}(\mathcal{H})}=\max(\|f_1\|_{\mathfrak{L}^{1}(\mathcal{H})},\|f_2\|_{\mathfrak{L}^{1}(\mathcal{H})})$.
The mapping $\mathcal{A}$ is contractive under the condition
\begin{eqnarray}\label{conde}
   &&\|f\|_{\mathfrak{L}^{1}(\mathcal{H})}<x_0,
\end{eqnarray}
where $x_0$ is a solution of the equation $e^{2x}(2x+1)=2$ such that $x_0\approx0,18742>e^{-2}$.

Therefore under condition (\ref{conde}) there exists a unique solution of equation (\ref{opera})
in the space $\mathfrak{L}^{1}(\mathcal{H})$.
This solution is determined as the limit of successive approximations $f^{(n)}=\mathcal{A}(f^{(n-1)})$ with the
first approximation $f^{(0)}=f^0\equiv \mathcal{G}_1(t,1)F_1(t,1)$. This solution expresses initial data $F_1^0$
by means of the one-particle density operator $F_1(t)$. Consequently, assembling the evolution operators
before the products of operators $F_1(t)$, we can represent solution (\ref{srBBGKY}) of the quantum BBGKY hierarchy
for $s\geq 2$ as the marginal functionals with respect to the one-particle density operator $F_1(t)$.

Thus, if the norm of initial one-particle density operator $\|F_1^0\|_{\mathfrak{L}^{1}(\mathcal{H})}$
satisfies established condition, i.e.
\begin{eqnarray}\label{cond}
   &&\|F_1^0\|_{\mathfrak{L}^{1}(\mathcal{H})}<e^{-2},
\end{eqnarray}
then in view of estimate (\ref{essrBBGKY}) there exists a sequence of marginal functionals of the state
$F_{s}\big(t\mid F_{1}(t)\big),\, s\geq 2$, which are represented by converged series (\ref{f}).
These functionals satisfy the quantum BBGKY hierarchy (\ref{1}) for $s\geq 2$, if the operator $F_{1}(t)$
is given by expression (\ref{srBBGKY}) for $s=1$.
The condition under which the marginal functionals of the state exist was ascertained in \cite{ZhT}
and in case of classical systems of particles in the paper \cite{GP97}.

\begin{remark}
Within the framework of the kinetic description of evolution of quantum states we have used the space of trace class operators
by reason of the existence of global in time solution of the quantum BBGKY hierarchy \cite{GerSh}. In this space
the condition: $\|F_1^0\|_{\mathfrak{L}^{1}(\mathcal{H})}<e^{-1}$, guarantees the convergence of series (\ref{srBBGKY})
and means that the average number of particles is finite. We can reformulate the convergence condition of series
(\ref{srBBGKY}) as a condition on the parameter characterizing the density $\frac{1}{v}$ of a system (the
average number of particles in a unit volume). In fact if we consider the quantum BBGKY hierarchy (\ref{1}) as the evolution
equation in the thermodynamic limit, then as a result of the renormalization of initial data $F_s^0=\frac{1}{v^s}\widetilde{F}_s^0$,
we obtain expansion (\ref{srBBGKY}) over powers of density $\frac{1}{v}$ which converges under the condition: $\frac{1}{v}<e^{-1}$
\cite{GerSh} or for arbitrary values of $\frac{1}{v}$ in case of reduced cumulants (\ref{ch}).
In this case marginal functionals of the state are represented by converged series (\ref{f}) under the condition: $\frac{1}{v}<e^{-2}$.
We emphasize that intensional spaces for the description of states of infinite-particle systems,
that means the description of kinetic evolution or equilibrium states \cite{Gin}, are different from the exploit space \cite{CGP97}.
\end{remark}

\subsection{Kinetic cluster expansions}
Now we formulate one more method to define the marginal functionals of the state in the explicit form,
namely, we develop the method of kinetic cluster expansions.
The following assertion is valid.
\begin{theorem}
Under condition (\ref{cond}) the marginal density operator $F_{s}(t)$ defined by (\ref{srBBGKY}),(\ref{ch})
for $s\geq2$ and the marginal functional $F_{s}\big(t\mid F_{1}(t)\big)$ defined by (\ref{f}),(\ref{skrr}) are equivalent
if and only if the evolution operators $\mathfrak{V}_{1+n}(t)\, ,n\geq0$, satisfy the following recurrence relations
\begin{eqnarray}\label{rrrl2}
  &&\widehat{\mathfrak{A}}_{1+n}(t,\{Y\},s+1,\ldots,s+n)=\sum_{n_1=0}^{n}\frac{n!}{(n-n_1)!}\,
     \mathfrak{V}_{1+n-n_1}\big(t,\{Y\},s+1,\ldots,\\
  &&s+n-n_1\big)\sum\limits_{\mbox{\scriptsize $\begin{array}{c}\mathrm{D}:Z=\bigcup_l X_l,\\|\mathrm{D}|\leq s+n-n_1\end{array}$}}
     \frac{1}{|\mathrm{D}|!}\,\sum_{i_1\neq\ldots\neq i_{|\mathrm{D}|}=1}^{s+n-n_1}\,\,
     \prod_{X_{l}\subset \mathrm{D}}\,\frac{1}{|X_l|!}\,\,\widehat{\mathfrak{A}}_{1+|X_{l}|}(t,i_l,X_{l})\nonumber,
\end{eqnarray}
where $\sum_{\mathrm{D}:Z=\bigcup_l X_l,\,|\mathrm{D}|\leq s+n-n_1}$ is the sum over all possible dissections
$\mathrm{D}$ of the linearly ordered set $Z\equiv(s+n-n_1+1,\ldots,s+n)$ on no more than $s+n-n_1$ linearly ordered subsets.
\end{theorem}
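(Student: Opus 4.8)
The plan is to turn the asserted equivalence into a single combinatorial identity relating the cumulants $\mathfrak{A}_{1+n}(t)$ of the groups (\ref{groupG}) to the evolution operators $\mathfrak{V}_{1+n}(t)$, and to recognize that identity as the system (\ref{rrrl2}). First, I would use the inverse map constructed in Section 3.1: under condition (\ref{cond}) the relation $f=\mathcal{A}(f)$, with $\mathcal{A}$ given by (\ref{opera}) and $f^0=\mathcal{G}_1(t,1)F_1(t,1)$, expresses $F_1^0$ as an absolutely convergent series in products of $F_1(t,i)$. Substituting that series for every factor $F_1^0(i)$ in the BBGKY solution (\ref{srBBGKY})--(\ref{ch}) for $s\geq2$ and regrouping---all the series converging absolutely in $\mathfrak{L}^1(\mathcal{H}_s)$ by (\ref{essrBBGKY}) and the contraction estimates of Section 3.1, so that the rearrangement is legitimate---rewrites $F_s(t,Y)$ as a series in products of $F_1(t,i)$. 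Comparing this with the expansion (\ref{f}) and using that $F_1(t)$ ranges over a neighbourhood of the origin in $\mathfrak{L}^1(\mathcal{H})$ (the map $F_1^0\mapsto F_1(t)$ from (\ref{srBBGKY}) with $s=1$ being a local homeomorphism that fixes the origin, so that two such series coincide iff their operator coefficients do), the equivalence $F_s(t)=F_s(t\mid F_1(t))$ holds if and only if, for every $n\geq0$, the operator coefficient of $\prod_{i=1}^{s+n}F_1(t,i)$ produced by the regrouping equals $\mathfrak{V}_{1+n}(t)$.

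Next, since those coefficients are built from cumulants $\mathfrak{A}$ of the groups $\mathcal{G}$ while (\ref{rrrl2}) is written with the reduced cumulants $\widehat{\mathfrak{A}}$ of the scattering groups (\ref{so}), I would factor the one-particle evolutions out of everything via $\mathcal{G}_n(-t,1,\ldots,n)=\widehat{\mathcal{G}}_n(t)\prod_{i=1}^{n}\mathcal{G}_1(-t,i)$, re-expressing $\mathfrak{A}_{1+n}(t)$ through the $\widehat{\mathfrak{A}}$'s dressed by one-particle groups. The crucial point is that the one-particle groups $\mathcal{G}_1(\pm t,i)$ introduced in this way cancel exactly against those carried by the inverse series $F_1^0=F_1^0[F_1(t)]$---this cancellation being precisely the content of the fixed-point equation $f=\mathcal{A}(f)$, in which a factor $\mathcal{G}_1(t,1)$ stands in front of every term---so that the coefficient of $\prod_{i=1}^{s+n}F_1(t,i)$ is expressed purely through the scattering cumulants $\widehat{\mathfrak{A}}_{1+m}(t)$, $m\leq n$.

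It then remains to match the combinatorics. In the regrouped coefficient, the indices $Y\cup\{s+1,\ldots,s+n-n_1\}$ that stay attached to the ``core'' scattering cumulant are carried, by induction on $n$, exactly by $\mathfrak{V}_{1+n-n_1}(t,\{Y\},s+1,\ldots,s+n-n_1)$, whereas the remaining $n_1$ indices, introduced through the expansions of the factors $F_1(t,i)$, organize into the connected blocks $X_l$ of a dissection $\mathrm{D}$ of the linearly ordered set $Z=(s+n-n_1+1,\ldots,s+n)$, each block being attached through an index $i_l$ to one of the particles already present; the weights $\frac{n!}{(n-n_1)!}$, $\frac1{|\mathrm{D}|!}$, $\frac1{|X_l|!}$ and the sum $\sum_{i_1\neq\ldots\neq i_{|\mathrm{D}|}}$ are then fixed by the symmetrization and by counting the linear orders collapsing onto a given dissection. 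This is exactly (\ref{rrrl2}). As the system (\ref{rrrl2}) is triangular in $n$---its $n_1=0$ term reads $\widehat{\mathfrak{A}}_1(t,\{Y\})=\mathfrak{V}_1(t,\{Y\})$ and in general isolates $\mathfrak{V}_{1+n}$ plus lower-order terms---it has a unique solution, and this furnishes both implications of the ``if and only if''; checking (\ref{rrrl2}) against the low-order operators (\ref{rrrls}) confirms the bookkeeping, and one then verifies by induction that the explicit operators (\ref{skrr}) solve (\ref{rrrl2}).

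The hard part is the combination of the last two steps: tracking how the partition structure of $\mathfrak{A}_{1+n}(t)$ interacts with the expansion of each factor $F_1(t,i)$ so that, once the one-particle groups have cancelled, one recovers precisely the dissection sums of (\ref{rrrl2}) with the stated combinatorial weights. The analytic ingredients---absolute convergence, admissibility of the rearrangements, membership in $\mathfrak{L}^1(\mathcal{H}_s)$---are routine under (\ref{cond}) given the bound (\ref{essrBBGKY}) and the estimates of Section 3.1.
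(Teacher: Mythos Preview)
Your approach is correct in outline but runs in the opposite direction from the paper's, and this makes the combinatorics you flag as ``the hard part'' genuinely harder than it needs to be. The paper does \emph{not} substitute the inverse map $F_1^0=F_1^0[F_1(t)]$ from Section~3.1 into (\ref{srBBGKY}); instead it substitutes the \emph{forward} expansion of $\prod_{i=1}^{s+n}F_1(t,i)$ in terms of $F_1^0$---formula (\ref{prod2}), which is explicit---into the functional expansion (\ref{f}), and then compares term by term with (\ref{srBBGKY}). This immediately yields the relations (\ref{rrrl3}) in terms of the cumulants $\mathfrak{A}$ of the groups $\mathcal{G}$; the passage to the scattering form (\ref{rrrl2}) is then a single remark (``under the trace signs'') rather than a delicate cancellation of one-particle groups against the inverse series. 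For sufficiency the paper again uses (\ref{prod2}): plug (\ref{rrrl3}) into (\ref{srBBGKY}), reindex the double sum, and recognize $\prod F_1(t,i)$.

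What each approach buys: your route keeps the logic of Section~3.1 and makes the ``if and only if'' transparent via the local-homeomorphism argument, but it requires you to unwind the successive-approximation series for the inverse and to perform the one-particle-group cancellation by hand---neither of which is written down anywhere in the paper. The paper's route sidesteps both by working with the explicit identity (\ref{prod2}), so the dissection sums and the weights $\frac{n!}{(n-n_1)!}$, $\frac{1}{|\mathrm{D}|!}$, $\frac{1}{|X_l|!}$ appear directly from (\ref{prod2}) rather than having to be reconstructed. If you want to follow your plan to the end, the cleanest fix is to observe that your coefficient-matching is equivalent to the paper's (the two substitutions are inverse to one another), and then simply invoke (\ref{prod2}) instead of the contraction-map inverse.
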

\begin{proof}
\textit{Necessity.}
To derive recurrence relations (\ref{rrrl2}) we assume that for $s\geq2$ marginal density operators (\ref{srBBGKY}) coincide with
the functionals of a one-particle density operator $F_{s}(t\mid F_{1}(t)),\,s\geq2$. These marginal functionals of the state
are represented in the form of series over particle clusters whose evolution is governed by the corresponding order
evolution operator acting on products of one-particle density operators defined on Hilbert spaces
associated with every particle from the cluster, namely as expansions (\ref{f}).

Observing that in case of $s=1$ for a solution of the quantum BBGKY hierarchy defined by expansion (\ref{srBBGKY})
the following equality holds
\begin{eqnarray}\label{prod2}
  &&\prod _{i=1}^{s+n} F_{1}(t,i)=\sum_{n_1=0}^{\infty}\mathrm{Tr}_{s+n+1,\ldots,s+n+n_1}
     \sum\limits_{\mbox{\scriptsize $\begin{array}{c}\mathrm{D}:Z=\bigcup_k X_k,\\|\mathrm{D}|\leq s+n\end{array}$}}
     \,\,\sum_{i_1<\ldots<i_{|\mathrm{D}|}=1}^{s+n}\,\prod_{X_{k}\subset \mathrm{D}}\,\frac{1}{|X_k|!}\times\\
  &&\times\mathfrak{A}_{1+|X_{k}|}(t,i_k,X_{k})\prod\limits_{\mbox{\scriptsize$\begin{array}{c}{l=1},
     \\l\neq i_1,\ldots, i_{|\mathrm{D}|}\end{array}$}}^{s+n}\mathfrak{A}_1(t,l)\prod_{j=1}^{n+s+n_1}F_1^0(j),\nonumber
\end{eqnarray}
where $\sum_{\mathrm{D}:Z=\bigcup_k X_k,\,|\mathrm{D}|\leq s+n}$ is the sum over all possible dissections $\mathrm{D}$
of the linearly ordered set $Z\equiv(s+n+1,\ldots,s+n+n_1)$ on no more than $s+n$ linearly ordered subsets,
we transform functionals $F_{s}(t\mid F_{1}(t)),\,s\geq2$, to the series over products of initial
one-particle density operators.

Then, equating term by term both series for $F_{s}(t),\,s\geq2$, and for the transformed functionals $F_{s}(t\mid F_{1}(t))$
under the trace sings for the evolution operators acting on the same products of initial data,
we obtain the following recurrence relations for the generating evolution operators of functionals (\ref{f})
in terms of cumulants of groups of operators (\ref{groupG})
\begin{eqnarray}\label{rrrl3}
  &&\mathfrak{A}_{1+n}(t,\{Y\},s+1,\ldots,s+n)=\mathfrak{V}_{1+n}\big(t,\{Y\},s+1,\ldots,s+n\big)+\\
  &&+\sum_{n_1=1}^{n}\frac{n!}{(n-n_1)!}\,\mathfrak{V}_{1+n-n_1}\big(t,\{Y\},s+1,\ldots,s+n-n_1\big)
     \sum\limits_{\mbox{\scriptsize $\begin{array}{c}\mathrm{D}:Z=\bigcup_k X_k,\\|\mathrm{D}|\leq s+n-n_1\end{array}$}}
     \frac{1}{|\mathrm{D}|!}\times\nonumber\\
  &&\times\sum_{i_1\neq\ldots\neq i_{|\mathrm{D}|}=1}^{s+n-n_1}\,
     \prod_{X_{k}\subset \mathrm{D}}\,\frac{1}{|X_k|!}\,\mathfrak{A}_{1+|X_{k}|}(t,i_k,X_{k})
     \prod\limits_{\mbox{\scriptsize$\begin{array}{c}{m=1},
     \\m\neq i_1,\ldots,i_{|\mathrm{D}|}\end{array}$}}^{s+n-n_1}\mathfrak{A}_1(t,m)\nonumber,
\end{eqnarray}
where the linearly ordered set $Z=(s+n-n_1+1,\ldots,s+n)$ is dissected on no more than $s+n-n_1$ linearly ordered subsets.

Under the trace signs recurrence relations (\ref{rrrl3}) are naturally represented in terms of cumulants
of scattering operators as (\ref{rrrl2}). We refer to recurrence relations (\ref{rrrl2})
as the kinetic cluster expansions of reduced cumulants of scattering operators (\ref{so}).

\textit{Sufficiency.}
Using recurrence relations (\ref{rrrl2}), i.e. kinetic cluster expansions of reduced cumulants of scattering operators (\ref{so}),
we construct the expansions of the functionals of a one-particle density operator $F_{s}(t\mid F_{1}(t)),\,s\geq2$,
on basis of solution expansions (\ref{srBBGKY}) of the quantum BBGKY hierarchy.
Indeed, taking into account relations (\ref{rrrl3}), we represent series over the summation index $n$ and
the sum over the summation index $n_1$ as the two-fold series
\begin{eqnarray*}
   &&F_{s}(t,1,\ldots,s)=\sum\limits_{n=0}^{\infty}\frac{1}{n!}\sum_{n_1=0}^{\infty}\mathrm{Tr}_{s+1,\ldots,s+n+n_1}
      \mathfrak{V}_{1+n}\big(t,\{Y\},s+1,\ldots,s+n\big)\hspace*{-2mm}
      \sum\limits_{\mbox{\scriptsize $\begin{array}{c}\mathrm{D}:Z=\bigcup_k X_k,\\|\mathrm{D}|\leq s+n\end{array}$}}
      \hspace*{-2mm}\frac{1}{|\mathrm{D}|!}\times\\
   &&\times\sum_{i_1\neq\ldots\neq i_{|\mathrm{D}|}=1}^{s+n}\,\prod_{X_{k}
      \subset\mathrm{D}}\,\frac{1}{|X_k|!}\,\mathfrak{A}_{1+|X_{k}|}(t,i_k,X_{k})
      \prod\limits_{\mbox{\scriptsize$\begin{array}{c}{l=1},
      \\l\neq i_1,\ldots,i_{|\mathrm{D}|}\end{array}$}}^{s+n}\mathfrak{A}_1(t,l)\prod_{j=1}^{n+s+n_1}F_1^0(j),
\end{eqnarray*}
where $Z\equiv(s+n+1,\ldots,s+n+n_1)$ is the linearly ordered set and it is used the notations introduced above.
The series in the right-hand side converge under condition (\ref{cond}).

In view of formula (\ref{prod2}) we identify the series over the summation index $n_1$ with the
products of one-particle density operators and consequently for $s\geq2$ the following equality holds
\begin{eqnarray*}
   &&F_{s}(t,1,\ldots,s)=\sum\limits_{n=0}^{\infty}\frac{1}{n!}\,\mathrm{Tr}_{s+1,\ldots,s+n}\,\,
       \mathfrak{A}_{1+n}(t,\{Y\},s+1,\ldots,s+n)\prod_{i=1}^{s+n} F_1^0(i)=\\
   &&=\sum\limits_{n=0}^{\infty}\frac{1}{n!}\mathrm{Tr}_{s+1,\ldots, s+n}
       \mathfrak{V}_{1+n}\big(t,\{Y\},s+1,\ldots,s+n\big)
       \prod _{i=1}^{s+n} F_{1}(t,i)=F_{s}(t\mid F_{1}(t)),
\end{eqnarray*}
i.e., if kinetic cluster expansions (\ref{rrrl2}) of cumulants of scattering operators (\ref{so}) hold, then
solution expansions (\ref{srBBGKY}) for $s\geq2$ can be represented
in the form of marginal functionals of the state (\ref{f}).
\end{proof}
We make a few examples of relations (\ref{rrrl2}) of the kinetic cluster expansions:
\begin{eqnarray*}\label{rrrle}
   &&\widehat{\mathfrak{A}}_{1}(t,\{Y\})=\mathfrak{V}_{1}(t,\{Y\}),\\
   &&\widehat{\mathfrak{A}}_{2}(t,\{Y\},s+1)=\mathfrak{V}_{2}(t,\{Y\},s+1)+
       \mathfrak{V}_{1}(t,\{Y\})\sum_{i_1=1}^s \widehat{\mathfrak{A}}_{2}(t,i_1,s+1),\\
   &&\widehat{\mathfrak{A}}_{3}(t,\{Y\},s+1,s+2)=\mathfrak{V}_{3}(t,\{Y\},s+1,s+2)+\\
   &&+2!\,\mathfrak{V}_{2}(t,\{Y\},s+1)\sum_{i_1=1}^{s+1} \widehat{\mathfrak{A}}_{2}(t,i_1,s+2)+\\
   &&+\mathfrak{V}_{1}(t,\{Y\})\big(\sum_{i_1=1}^s \widehat{\mathfrak{A}}_{3}(t,i_1,s+1,s+2)+\sum_{i_1\neq i_2=1}^s
       \widehat{\mathfrak{A}}_{2}(t,i_1,s+1)\widehat{\mathfrak{A}}_{2}(t,i_2,s+2)\big).
\end{eqnarray*}
It is evident that solutions of these relations are given by expressions (\ref{rrrls})
which the evolution operators (\ref{skrr}) of the first, second and third order correspondingly
are determined by in the expansions of marginal functionals of the state  (\ref{f}). In general case solutions
of recurrence relations (\ref{rrrl2}) are given by expressions (\ref{skrr}). This statement is verified
as a result of the substitution of expressions (\ref{skrr}) into recurrence relations (\ref{rrrl2}).

It should be emphasized that in case under consideration, i.e. the absence of correlations at initial time,
the correlations generated by the dynamics of a system are completely governed by evolution operators (\ref{skrr}).

Typical properties for the kinetic description of the evolution of constructed marginal functionals of the state (\ref{f})
are induced by the properties of evolution operators (\ref{skrr}).
Let us indicate some intrinsic properties of the evolution operators $\mathfrak{V}_{1+n}(t),\,n\geq0$,
representative for cumulants (semi-invariants) of group of operators.

Since in case of a system of non-interacting particles for scattering operators (\ref{so})
the equality holds: $\widehat{\mathcal{G}}_{n}(t)=I$,
where $I$ is a unit operator, then we have
\begin{eqnarray*}
    &&\mathfrak{V}_{1+n}(t)=I\delta_{n,0},
\end{eqnarray*}
where $\delta_{n,1}$ is a Kronecker symbol.
Similarly, at initial time $t=0$ it is true: $\mathfrak{V}_{1+n}(0)=I\delta_{n,0}$.

The infinitesimal generator of the first-order evolution operator (\ref{rrrls}) is defined by the following
limit in the sense of the norm convergence in the space $\mathfrak{L}^{1}(\mathcal{H}_{n})$
\begin{eqnarray*}
    &&\lim\limits_{t\rightarrow 0}\frac{1}{t}(\mathfrak{V}_{1}(t,\{1,\ldots,n\})-I)f_{n}
       =\sum_{i<j=1}^n(-\mathcal{N}_{\mathrm{int}}(i,j))f_{n},
\end{eqnarray*}
where the operator $(-\mathcal{N}_{\mathrm{int}}(i,j))$ is defined by formula (\ref{cint}) for $f_n\in\mathfrak{L}^{1}_0(\mathcal{H}_{n})\subset\mathfrak{L}^{1}(\mathcal{H}_{n})$.

In general case, i.e. $n\geq2$, in the sense of the norm convergence in the space $\mathfrak{L}^{1}(\mathcal{H}_{n})$
for the $n$-order evolution operator (\ref{skrr}) it holds
\begin{eqnarray*}
    &&\lim\limits_{t\rightarrow 0}\frac{1}{t}\mathfrak{V}_{n}(t,1,\ldots,n) f_{n}=0.
\end{eqnarray*}

Summarize we observe that in case of initial data (\ref{h2}) which is completely characterized
by the one-particle density operator $F_{1}^0$, solution (\ref{srBBGKY}) for $s\geq2$ of the quantum BBGKY hierarchy (\ref{1})
and marginal functionals of the state (\ref{f}) give two equivalent approaches to
the description of states of quantum many-particle systems.

\subsection{The derivation of the generalized quantum kinetic equation}
Let us construct the evolution equation which satisfies expression (\ref{srBBGKY}),(\ref{ch}) for $s=1$ .

Taking into account equality (\ref{infOper}) and observing
the validity of the following equalities for reduced cumulants (\ref{ch}) of groups (\ref{groupG}) for $f\in\mathfrak{L}^{1}_{0}(\mathcal{F}_\mathcal{H})$ in the sense of the norm convergence
(for $n\geq2$ it is a consequence that we consider a system of particles interacting by a two-body potential):
\begin{eqnarray*}
   &&\lim\limits_{t\rightarrow 0}\frac{1}{t}\,\mathrm{Tr}_{2}\,\mathfrak{A}_{2}(t,1,2)f_{2}(1,2)
       =\mathrm{Tr}_{2}\big(-\mathcal{N}_{\mathrm{int}}(1,2)\big)f_{2}(1,2),\\ \\
   &&\lim\limits_{t\rightarrow 0}\frac{1}{t}\,\mathrm{Tr}_{2,\ldots,n+1}\,\mathfrak{A}_{1+n}(t,1,\ldots,n+1)f_{n+1}=0,\quad n\geq2,
\end{eqnarray*}
we will differentiate over the time variable expression (\ref{srBBGKY}),(\ref{ch}) for $s=1$
in the sense of pointwise convergence in the space $\mathfrak{L}^{1}(\mathcal{H}_{1})$. As result it holds
\begin{eqnarray}\label{de}
  &&\frac{d}{dt}F_{1}(t,1)=-\mathcal{N}_1(1))F_{1}(t,1)+\\
  &&+\mathrm{Tr}_{2}(-\mathcal{N}_{\mathrm{int}}(1,2))\sum\limits_{n=0}^{\infty}\frac{1}{n!}\mathrm{Tr}_{3,\ldots,n+2}
      \,\mathfrak{A}_{1+n}(t,\{1,2\},3,\ldots,n+2)\prod_{i=1}^{n+2} F_1^0(i).\nonumber
\end{eqnarray}
In second summand in the right-hand side of this equality we expand reduced cumulants (\ref{ch})
of groups (\ref{groupG}) into transformed (\ref{rrrl3}) kinetic cluster
expansions (\ref{rrrl2}) and represent series over the summation index $n$
and the sum over the summation index $n_1$ as the two-fold series. Then the following equalities take place:
\begin{eqnarray*}
  &&\sum\limits_{n=0}^{\infty}\frac{1}{n!}\,\mathrm{Tr}_{2,\ldots,n+2}(-\mathcal{N}_{\mathrm{int}}(1,2))
      \mathfrak{A}_{1+n}(t,\{1,2\},3,\ldots,n+2)\prod_{i=1}^{n+2} F_1^0(i)=\\
  &&=\sum\limits_{n=0}^{\infty}\frac{1}{n!}\,\mathrm{Tr}_{2,\ldots,n+2}(-\mathcal{N}_{\mathrm{int}}(1,2))
      \sum_{n_1=0}^{n}\frac{n!}{(n-n_1)!}\,\mathfrak{V}_{1+n-n_1}\big(t,\{1,2\},3,\ldots,n+2-n_1\big)\times\nonumber\\
  &&\times\sum_{\mathrm{D}:Z=\bigcup_l X_l}\frac{1}{|\mathrm{D}|!}\,\sum_{i_1\neq\ldots\neq i_{|\mathrm{D}|}=1}^{n+2-n_1}\,
      \,\prod_{X_{l}\subset\mathrm{D}}\frac{1}{|X_l|!}\,\mathfrak{A}_{1+|X_{l}|}(t,i_l,X_{l})
      \prod\limits_{\mbox{\scriptsize$\begin{array}{c}{m=1},\\m\neq i_1,\ldots,i_{|\mathrm{D}|}\end{array}$}}^{2+n-n_1}
      \mathfrak{A}_1(t,m)\prod_{i=1}^{n+2} F_1^0(i)=\\
  &&=\mathrm{Tr}_{2}(-\mathcal{N}_{\mathrm{int}}(1,2))\sum\limits_{n=0}^{\infty}
      \frac{1}{n!}\,\mathrm{Tr}_{3,\ldots,n+2}\mathfrak{V}_{1+n}\big(t,\{1,2\},3,\ldots,n+2\big)
      \sum_{n_1=0}^{\infty}\,\sum_{\mathrm{D}:Z^{'}=\bigcup_l X_l}\frac{1}{|\mathrm{D}|!}\times\\
  &&\times\sum_{i_1\neq\ldots\neq i_{|\mathrm{D}|}=1}^{n+2}
      \,\prod_{X_{l}\subset\mathrm{D}}\frac{1}{|X_l|!}\,\mathfrak{A}_{1+|X_{l}|}(t,i_l,X_{l})
      \prod\limits_{\mbox{\scriptsize$\begin{array}{c}{m=1},\\m\neq i_1,\ldots,i_{|\mathrm{D}|}\end{array}$}}^{n+2}
      \mathfrak{A}_1(t,m)\prod_{i=1}^{n+2+n_1}F_1^0(i),
\end{eqnarray*}
where $Z\equiv(n+3-n_1,\ldots,n+2)$ and $Z^{'}\equiv(n+3,\ldots,n+2+n_1)$ are linearly ordered sets
and it is used the notations accepted above.

Consequently, applying in case of $s=2$ formula (\ref{prod2}) to the obtained expression,
from equality (\ref{de}) we derive
\begin{eqnarray}\label{eq}
  &&\frac{d}{dt}F_{1}(t,1)=-\mathcal{N}_1(1)F_{1}(t,1)+\\
  &&+\mathrm{Tr}_{2}(-\mathcal{N}_{\mathrm{int}}(1,2))\sum _{n=0}^{\infty}\frac{1}{n!}\,\mathrm{Tr}_{3,\ldots,{n+2}}
     \,\mathfrak{V}_{1+n}\big(t,\{1,2\},3,\ldots,n+2\big)\prod _{i=1}^{n+2} F_{1}(t,i).\nonumber
\end{eqnarray}
Under condition (\ref{cond}) the series in right-hand side of this equality converges.

The constructed identity (\ref{eq}) for the one-particle (marginal) density operator $F_{1}(t,1)$
we will treat as the evolution equation which governs the one-particle states
of many-particle quantum systems obeying the Maxwell-Boltzmann statistics.

We remark that one more approach to the derivation of the generalized quantum kinetic equation consists in
its construction on the basis of dynamics of correlations \cite{GerSh, GP10}.

Thus, if initial data is completely defined by a one-particle density operator, then
all possible states of infinite-particle systems at arbitrary moment of time can be described within
the framework of a one-particle density operator without any approximations. In other words, for mentioned states
the evolution of states governed by the quantum BBGKY hierarchy (\ref{1}) can be completely described by
the generalized quantum kinetic equation (\ref{gke}) and therefore Proposition 1 is valid.

\subsection{Some properties of marginal functionals of the state}
We indicate that expansions (\ref{f}) of marginal functionals of the state are nonequilibrium analog
of the Mayer-Ursell expansions over powers of the density of equilibrium
marginal density operators \cite{B46,PR}.

In case of the description of states in terms of the marginal correlation operators \cite{GerSh,GP10}
\begin{eqnarray*}
    &&G_{s}(t,Y)=\sum_{\mbox{\scriptsize $\begin{array}{c}\mathrm{P}:Y=\bigcup_{i}X_{i}\end{array}$}}
        (-1)^{|\mathrm{P}|-1}(|\mathrm{P}|-1)!\,\prod_{X_i\subset \mathrm{P}}F_{|X_i|}(t,X_i),
\end{eqnarray*}
where $\sum_\mathrm{P}$ is the sum over all possible partitions $\mathrm{P}$ of the set $Y\equiv(1,\ldots,s),\,s\geq2$,
into $|\mathrm{P}|$ nonempty mutually disjoint subsets $ X_i\subset Y$ and in particular, $G_{1}(t)=F_{1}(t)$,
the marginal correlation functionals $G_{s}\big(t,Y\mid F_{1}(t)\big),\,s\geq2$, are represented by the expansions
similar to (\ref{f}), namely
\begin{eqnarray}\label{cf}
    &&G_{s}\big(t,Y\mid F_{1}(t)\big)=\sum\limits_{n=0}^{\infty}\frac{1}{n!}\,\mathrm{Tr}_{s+1,\ldots, s+n}
        \mathfrak{V}_{1+n}\big(t,\theta(\{Y\}),s+1,\ldots,s+n\big)\prod _{i=1}^{s+n} F_{1}(t,i).
\end{eqnarray}
In expansion (\ref{cf}) it is introduced the notion of the declasterization mapping $\theta: \{Y\}\rightarrow Y$.
This mapping is defined by the formula \cite{P10}
\begin{eqnarray*}
    &&\theta(\{Y\})=Y,
\end{eqnarray*}
that it means the declasterization of particle clusters in cumulants of scattering operators, i.e.
in contrast to expansion (\ref{f}) the $n$ term of expansions (\ref{cf}) of marginal correlation functionals
$G_{s}\big(t,1,\ldots,s\mid F_{1}(t)\big)$ is governed by the $(1+n)$-order evolution operator (\ref{skrr})
of the $(s+n)$-order, $n\geq0$, cumulants of the scattering operators, for example, as compared to (\ref{rrrls})
the lower orders evolution operators $\mathfrak{V}_{1+n}\big(t,\theta(\{Y\}),s+1,\ldots,s+n\big), \,n\geq0$, have the form
\begin{eqnarray}\label{rrrlsc}
   &&\mathfrak{V}_{1}(t,\theta(\{Y\}))=\widehat{\mathfrak{A}}_{s}(t,\theta(\{Y\}),\\
   &&\mathfrak{V}_{2}(t,\theta(\{Y\}),s+1)=\widehat{\mathfrak{A}}_{s+1}(t,\theta(\{Y\}),s+1)-
       \widehat{\mathfrak{A}}_{s}(t,\theta(\{Y\}))\sum_{i=1}^s \widehat{\mathfrak{A}}_{2}(t,i,s+1),\nonumber
\end{eqnarray}
and in case of $s=2$, it holds
\begin{eqnarray*}
    &&\mathfrak{V}_{1}(t,\theta(\{1,2\}))=\widehat{\mathcal{G}}_{2}(t,1,2)-I.
\end{eqnarray*}

In the framework of the description of states by marginal functionals of the state (\ref{f}) the average values, for example,
of the additive-type observables $A^{(1)}=(0,a_{1},\ldots,{\sum\limits}_{i=1}^{n}a_{1}(i),\ldots)$ are given by the functional
\begin{eqnarray}\label{averageg}
    &&\langle A^{(1)}\rangle(t)=\mathrm{Tr}_{1}\,a_{1}(1)F_{1}(t,1),
\end{eqnarray}
i.e. they are defined by a solution of the generalized quantum kinetic equation (\ref{gke}),
or in general case of the $s$-particle observables $A^{(s)}=(0,\ldots,0,a_{s}(1,\ldots,s),\ldots,
\sum_{i_{1}<\ldots<i_{s}=1}^{n}a_s(i_{1},\ldots,i_{s}),\ldots)$ by the functional
\begin{eqnarray*}
    &&\langle A^{(s)}\rangle(t)=\frac{1}{s!}\,\mathrm{Tr}_{1,\ldots,s}\,a_{s}(1,\ldots,s)
       F_{s}\big(t,1,\ldots,s\mid F_{1}(t)\big), \quad s\geq2.
\end{eqnarray*}
For $A^{(s)}\in \mathfrak{L}(\mathcal{F}_\mathcal{H})$ and $F_1(t)\in \mathfrak{L}^{1}(\mathcal{H})$
these functionals exist.

The dispersion of an additive-type observable is defined by a solution of the generalized quantum kinetic
equation (\ref{gke}) and marginal correlation functionals (\ref{cf}) as follows
\begin{eqnarray*}
    &&\langle(A^{(1)}-\langle A^{(1)}\rangle(t))^2\rangle(t)=\\
    &&=\mathrm{Tr}_{1}\,(a_1^2(1)-\langle A^{(1)}\rangle^2(t))F_{1}(t,1)+
       \mathrm{Tr}_{1,2}\,a_{1}(1)a_{1}(2)G_{2}\big(t,1,2\mid F_{1}(t)\big),
\end{eqnarray*}
where the functional $\langle A^{(1)}\rangle(t)$ is determined by expression (\ref{averageg}). Note that
the dispersion of observables is minimal for states characterized by marginal correlation functionals (\ref{cf})
equals to zero, i.e. from macroscopic point of view the evolution of many-particle states
with the minimal dispersion is the Markovian kinetic evolution.

In fact functionals (\ref{cf}) or (\ref{f}) characterize the correlations of states
of quantum many-particle systems. We illustrate close links of functionals (\ref{cf})
and (\ref{f}) in the following way:
\begin{eqnarray*}
    &&F_{2}\big(t,1,2\mid F_{1}(t)\big)=F_{1}(t,1)F_{1}(t,2)+G_{2}\big(t,1,2\mid F_{1}(t)\big).
\end{eqnarray*}
Basically this equality gives the classification of all possible currently in use
scaling limits \cite{GH,Sp80,Sp91}. In the scaling limits it is assumed that chaos property (\ref{h2})
of initial state preserves in time, i.e. the scaling limit means such limit of dimensionless parameters
of a system in which the marginal correlation functional $G_{2}\big(t,1,2\mid F_{1}(t)\big)$ vanishes.
According to definition (\ref{cf}), it is possible, if particles of every finite particle cluster move
without collisions (\ref{rrrlsc}). In conclusions the mean-field scaling limit of functionals (\ref{f})
and (\ref{cf}) holds up.

Another approach to the derivation of the Markovian kinetic equations was formulated by Bogolyubov \cite{B46} (see also \cite{CU})
and consists in the construction of marginal functionals of the state $F_{s}\big(t,Y\mid F_{1}(t)\big)$ by the perturbation method.

Let us consider first two terms of expansion (\ref{f}).
If an interaction potential in (\ref{H_Zag}) is a bounded operator and $f_{s+1}\in\mathfrak{L}^{1}(\mathcal{H}_{s+1})$, then
for the second-order cumulant $\widehat{\mathfrak{A}}_{2}(t,\{Y\},s+1)$ of scattering operators (\ref{so})
an analog of the Duhamel equation holds
\begin{eqnarray}\label{dcum}
    &&\widehat{\mathfrak{A}}_{2}(t,\{Y\},s+1)f_{s+1}=\int_0^t d\tau\,\mathcal{G}_{s}(-\tau,Y)
       \mathcal{G}_{1}(-\tau,s+1)\sum\limits_{i_1=1}^{s}\big(-\mathcal{N}_{\mathrm{int}}(i_1,s+1)\big)\times\\
    &&\times\widehat{\mathcal{G}}_{s+1}(\tau-t,Y,s+1)\prod_{i_2=1}^{s+1}\mathcal{G}_{1}(\tau,i_2)f_{s+1},\nonumber
\end{eqnarray}
and, consequently, for the second-order evolution operator $\mathfrak{V}_{2}(t,\{Y\},s+1)$ we have
\begin{eqnarray}\label{Df}
    &&\mathfrak{V}_{2}(t,\{Y\},s+1)f_{s+1}\doteq\big(\widehat{\mathfrak{A}}_{2}(t,\{Y\},s+1)-
       \widehat{\mathfrak{A}}_{1}(t,\{Y\})\sum_{i_1=1}^s \widehat{\mathfrak{A}}_{2}(t,i_1,s+1)\big)f_{s+1}=\\
    &&=\int_0^t d\tau\,\mathcal{G}_{s}(-\tau,Y)\mathcal{G}_{1}(-\tau,s+1)\big(\sum\limits_{i_1=1}^{s}
       (-\mathcal{N}_{\mathrm{int}}(i_1,s+1))\widehat{\mathcal{G}}_{s+1}(\tau-t,Y,s+1)-\nonumber\\
    &&-\widehat{\mathcal{G}}_{s}(\tau-t,Y)\sum\limits_{i_1=1}^{s}(-\mathcal{N}_{\mathrm{int}}(i_1,s+1))
       \widehat{\mathcal{G}}_{2}(\tau-t,i_1,s+1)\big)\prod_{i_2=1}^{s+1}\mathcal{G}_{1}(\tau,i_2)f_{s+1}.\nonumber
\end{eqnarray}
In the kinetic (macroscopic) scale of the variation of variables \cite{CIP} groups of operators (\ref{groupG})
of finitely many particles depend on microscopic time variable $\varepsilon^{-1}t$, where $\varepsilon\geq0$ is a scale
parameter, and the dimensionless marginal functionals of the state are represented in the form:
$F_{s}\big(\varepsilon^{-1}t,Y\mid F_{1}(t)\big)$.
Note that on the macroscopic scale the typical length for the kinetic phenomena described, for example,
by the quantum Boltzmann equation is the mean free pass. Then according to (\ref{Df})
in the formal Markovian limit $\varepsilon\rightarrow0$ the first two terms of the dimensionless
marginal functional expansions coincide with corresponding terms constructed by the perturbation
method with the use of the weakening of correlation condition in \cite{B46} (see also \cite{B47,CU})
\begin{eqnarray}\label{mark}
    &&\lim\limits_{\epsilon\rightarrow 0}F_{s}\big(\varepsilon^{-1}t,Y\mid F_{1}(t)\big)=
       \widehat{\mathcal{G}}_{s}(\infty,Y)\prod _{i=1}^{s} F_{1}(t,i)+\\
    &&+\int_0^{\infty} d\tau\,\mathcal{G}_{s}(-\tau,Y)\mathrm{Tr}_{s+1}\big(\sum\limits_{i_1=1}^{s}
       (-\mathcal{N}_{\mathrm{int}}(i_1,s+1))\widehat{\mathcal{G}}_{s+1}(\infty,Y,s+1)-\nonumber\\
    &&-\widehat{\mathcal{G}}_{s}(\infty,Y)\sum\limits_{i_1=1}^{s}(-\mathcal{N}_{\mathrm{int}}(i_1,s+1))
       \widehat{\mathcal{G}}_{2}(\infty,i_1,s+1)\big)\prod_{i_2=1}^{s+1}\mathcal{G}_{1}(\tau,i_2)F_{1}(t,i_2)+etc.\nonumber
\end{eqnarray}

Therefore in the kinetic scale the collision integral of the generalized kinetic equation (\ref{gke}) takes the form of
Bogolyubov's collision integral \cite{B46,CU} which enables to control correlations of infinite-particle systems.
We remark that in the homogeneous case the collision integral of the first approximation in (\ref{mark}) has
a more general form than the quantum Boltzmann collision integral.

\section{Initial-value problem of generalized kinetic equation}

\subsection{An existence theorem}
Before considering abstract initial-value problem (\ref{gke})-(\ref{2}) in the space $\mathfrak{L}^{1}(\mathcal{H})$
we generalize it for case of $n$-body interaction potential $\Phi^{(n)}, n\geq1$. In this case the Cauchy problem
of the generalized quantum kinetic equation has the form
\begin{eqnarray}\label{gkeN}
    &&\frac{d}{dt}F_{1}(t,1)=-\mathcal{N}_{1}(1)F_{1}(t,1)+
       \sum\limits_{n=1}^{\infty}\sum _{k=1}^{n}\frac{1}{(n-k)!}\frac{1}{k!}
       \,\mathrm{Tr}_{2,\ldots,n+1}(-\mathcal{N}_{\mathrm{int}}^{(k+1)})(1,\nonumber\\
    &&\ldots,k+1)\mathfrak{V}_{1+n-k}(t,\{1,\ldots,k+1\},k+2,\ldots,n+1)\prod _{i=1}^{n+1} F_{1}(t,i),\\
\label{2N}
    &&F_1(t,1)|_{t=0}= F_1^0(1),
\end{eqnarray}
where $\mathfrak{V}_{1+n-k}(t)$, is the $(1+n-k)$-order evolution operator (\ref{skrr})
and notations (\ref{com}),(\ref{cint}) are used, and
\begin{eqnarray*}
    &&\mathcal{N}_{\mathrm{int}}^{(n)}f_n\doteq-\frac{i}{\hbar}\big(f_n\,\Phi^{(n)}-\Phi^{(n)}\,f_n\big).
\end{eqnarray*}
The collision integral in the generalized quantum kinetic equation (\ref{gkeN}) is defined
by the convergent series under condition (\ref{cond}).

For the sake of a comparison of the structure of various collision integral components in (\ref{gkeN}) we give
expressions of the collision integral term describing a two-body interaction and three particle correlations
\begin{eqnarray*}
    &&\mathrm{Tr}_{2,3}(-\mathcal{N}_{\mathrm{int}}^{(2)})(1,2)
       \mathfrak{V}_{2}(t,\{1,2\},3)F_{1}(t,1)F_{1}(t,2)F_{1}(t,3),
\end{eqnarray*}
and the collision integral term describing a three-body interaction
\begin{eqnarray*}
    &&\frac{1}{2!}\mathrm{Tr}_{2,3}(-\mathcal{N}_{\mathrm{int}}^{(3)})(1,2,3)
       \mathfrak{V}_{1}(t,\{1,2,3\})F_{1}(t,1)F_{1}(t,2)F_{1}(t,3),
\end{eqnarray*}
where the evolution operators $\mathfrak{V}_{2}(t,\{1,2\},3)$ and $\mathfrak{V}_{1}(t,\{1,2,3\})$
are defined by formulas (\ref{rrrls}).

For the Cauchy problem (\ref{gkeN})-(\ref{2N}) ( and (\ref{gke})-(\ref{2})) in the space $\mathfrak{L}^{1}(\mathcal{H})$
the following statement is true.
\begin{theorem}
The global in time solution of initial-value problem (\ref{gkeN})-(\ref{2N})
is determined by the following expansion
\begin{eqnarray}\label{ske}
    &&F_{1}(t,1)= \sum\limits_{n=0}^{\infty}\frac{1}{n!}\,\mathrm{Tr}_{2,\ldots,{1+n}}\,\,
        \mathfrak{A}_{1+n}(t,1,\ldots,n+1)\prod _{i=1}^{n+1}F_{1}^0(i),
\end{eqnarray}
where the reduced cumulants $\mathfrak{A}_{1+n}(t),\, n\geq0,$ are defined by formula (\ref{ch}).
If $\|F_1^0\|_{\mathfrak{L}^{1}(\mathcal{H})}<e^{-2}$, then for $F_1^0\in\mathfrak{L}^{1}_{0}(\mathcal{H})$ it is
a strong (classical) solution and for an arbitrary initial data $F_1^{0}\in\mathfrak{L}^{1}(\mathcal{H})$
it is a weak (generalized) solution.
\end{theorem}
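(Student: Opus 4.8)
The plan rests on the observation that expansion (\ref{ske}) is nothing but the solution expansion (\ref{srBBGKY}),(\ref{ch}) of the quantum BBGKY hierarchy specialized to $s=1$; consequently, by the estimate (\ref{essrBBGKY}) the series (\ref{ske}) converges in the norm of $\mathfrak{L}^{1}(\mathcal{H})$ for an arbitrary $F_1^0\in\mathfrak{L}^{1}(\mathcal{H})$ and for every $t\in\mathbb{R}^1$, while under the condition $\|F_1^0\|_{\mathfrak{L}^{1}(\mathcal{H})}<e^{-2}$ the marginal functional expansions (\ref{f}),(\ref{skrr}) entering the collision integral of (\ref{gkeN}) converge as well. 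The heart of the matter --- that the operator-valued function (\ref{ske}) satisfies the generalized quantum kinetic equation --- has in essence already been carried out in Section 3.3 for a two-body potential, leading from (\ref{de}) to (\ref{eq}); for the $n$-body generalization (\ref{gkeN}) one repeats that computation verbatim, the only change being that the infinitesimal contribution of the reduced cumulant $\mathfrak{A}_{1+n}(t)$ now comes from all the interaction operators $(-\mathcal{N}_{\mathrm{int}}^{(k+1)})$, $1\le k\le n$, supported on clusters of the corresponding size.

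To establish that (\ref{ske}) is a strong (classical) solution for $F_1^0\in\mathfrak{L}^{1}_0(\mathcal{H})$, the plan is: (i) differentiate the series (\ref{ske}) term by term in $t$ in the norm of $\mathfrak{L}^{1}(\mathcal{H})$, which is legitimate once one shows that the differentiated series converges uniformly on each compact time interval --- this in turn follows from the norm estimates for the reduced cumulants of the groups (\ref{groupG}) together with the boundedness of the generators (\ref{com}),(\ref{cint}) on $\mathfrak{L}^{1}_0$; (ii) evaluate the term-by-term derivative using (\ref{infOper}) and the limiting relations for the reduced cumulants recorded in Section 3.3, obtaining the analog of (\ref{de}); (iii) expand each reduced cumulant $\mathfrak{A}_{1+n}(t)$ via the transformed kinetic cluster expansions (\ref{rrrl3}) provided by Theorem 2, interchange the order of summation so as to display a two-fold series, and apply formula (\ref{prod2}) to collapse the auxiliary sum back into the products $\prod_i F_1(t,i)$, after which the right-hand side takes precisely the collision-integral form of (\ref{gkeN}). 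The initial condition (\ref{2N}) holds because $\mathfrak{A}_{1+n}(0)=I\,\delta_{n,0}$, and the fact that $F_1(t)$ remains in the domain of $\mathcal{N}_1$ for all $t$ follows from $F_1^0\in\mathfrak{L}^{1}_0(\mathcal{H})$ and the group properties of (\ref{groupG}).

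For an arbitrary initial data $F_1^0\in\mathfrak{L}^{1}(\mathcal{H})$ with $\|F_1^0\|_{\mathfrak{L}^{1}(\mathcal{H})}<e^{-2}$ I would pass to the equivalent integral (mild) formulation of (\ref{gkeN}), in which the unbounded term $-\mathcal{N}_1 F_1$ is replaced by the action of the group $\mathcal{G}_1(-t)$ on the initial data plus its Duhamel convolution with the collision integral. Since the bound (\ref{essrBBGKY}) shows that the solution map $F_1^0\mapsto F_1(t)$ determined by (\ref{ske}) is bounded, uniformly for $t$ in compact intervals, and since $\mathfrak{L}^{1}_0(\mathcal{H})$ is dense in $\mathfrak{L}^{1}(\mathcal{H})$, one approximates $F_1^0$ by $F_1^{0,(m)}\in\mathfrak{L}^{1}_0(\mathcal{H})$, notes that the corresponding strong solutions satisfy the integral identity, and passes to the limit $m\to\infty$; the passage is justified by the strong continuity of $\mathcal{G}_1(-t)$, the continuity in $\mathfrak{L}^{1}(\mathcal{H})$ of the bounded operators building the collision integral, and the uniform-in-$m$ convergence of all the series under condition (\ref{cond}). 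This yields that (\ref{ske}) satisfies the integral identity for general $F_1^0$, i.e. it is a weak (generalized) solution.

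The step I expect to be the main obstacle is (iii) together with the justification of term-by-term differentiation in (i): one must control, uniformly on compact time intervals, the trace-class norms of the differentiated cumulants acting on $\mathfrak{L}^{1}_0$, and then verify that the interchanges of summation order, the partial traces, and the application of (\ref{prod2}) are all performed inside absolutely convergent series --- that is, that the formal reorganization of Section 3.3 is licit under condition (\ref{cond}). Once this bookkeeping is done, the strong-solution statement follows from the derivation of Section 3.3 adapted to the $n$-body potential, and the weak-solution statement follows from the density and continuity argument above.
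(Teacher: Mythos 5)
Your treatment of the strong solution is essentially the paper's own argument: differentiation of the series term by term via the infinitesimal limits of the reduced cumulants, substitution of the transformed kinetic cluster expansions (\ref{rrrl3}), reorganization into a two-fold series, and collapse of the auxiliary sum by formula (\ref{prod2}) so that the right-hand side becomes the collision integral of (\ref{gkeN}); this reproduces equalities (\ref{den}) and (\ref{denn}) and hence (\ref{sts}). Where you genuinely diverge is the weak-solution half. The paper does not use a mild (Duhamel) formulation with a density argument; it defines a weak solution by duality, introducing the functional $(f_{1},F_{1}(t))=\mathrm{Tr}_{1}f_{1}(1)F_{1}(t,1)$ for degenerate bounded test operators $f_{1}\in\mathfrak{L}_{0}(\mathcal{H})$ with smooth compactly supported kernels, transferring the groups and the generators onto $f_{1}$ by adjointness, and verifying the identity (\ref{d_funk-gNr}) in the sense of $\ast$-weak convergence (and, further, an extended version (\ref{w}) involving the dual of the BBGKY generator and the whole sequence of marginal functionals). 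The trade-off is this: your route, if it is to close, needs the collision integral to be a well-defined $\mathfrak{L}^{1}(\mathcal{H})$-valued, integrable function of $t$ for arbitrary trace-class data, which for the Kato-class (possibly unbounded) potentials of (\ref{H_Zag}) is exactly the delicate point; the duality formulation sidesteps part of this by letting the commutators act on the smooth test operator, although the paper too ultimately restricts to bounded interaction potentials for the existence of the limit functionals. Conversely, your mild-solution statement, where it applies, is the stronger and more standard conclusion. Since the theorem asserts a ``weak (generalized) solution'' in the sense the paper subsequently defines, you should either adopt that dual formulation or state explicitly the boundedness hypothesis under which your integral formulation and limit passage are licit.
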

\begin{proof}
Let $F_1^0\in\mathfrak{L}^{1}_{0}(\mathcal{H})$. It will be recalled that series (\ref{ske}) converges
in the norm of the space $\mathfrak{L}^{1}(\mathcal{H})$ and estimate (\ref{essrBBGKY}) holds.
Series (\ref{ske}) is a strong solution of initial-value problem (\ref{gkeN})-(\ref{2N}),
if the equality holds
\begin{eqnarray}\label{sts}
    &&\lim_{\triangle t\rightarrow 0}\big\|\frac{1}{\triangle t}\,\big(F_{1}(t+\triangle t,1)-F_{1}(t,1)\big)-\\
    &&-\big(-\mathcal{N}_{1}(1)F_{1}(t,1)+\sum\limits_{n=1}^{\infty}\sum _{k=1}^{n}\frac{1}{(n-k)!}\frac{1}{k!}
       \,\mathrm{Tr}_{2,\ldots,n+1}(-\mathcal{N}_{\mathrm{int}}^{(k+1)})(1,\ldots,k+1)\times\nonumber\\
    &&\times\mathfrak{V}_{1+n-k}(t,\{1,\ldots,k+1\},k+2,\ldots,n+1)\prod _{i=1}^{n+1}
       F_{1}(t,i)\big)\big\|_{\mathfrak{L}^{1}(\mathcal{H})}=0,\nonumber
\end{eqnarray}
where abridged notations are applied: the symbols $F_{1}(t,1)$ and $\prod _{i=1}^{n+1}F_{1}(t,i)$
are implied series (\ref{ske}) and for $s=1$ series (\ref{prod2}), respectively.

To prove the existence of a strong solution of initial-value problem (\ref{gkeN})-(\ref{2N}) we use the result of section 3.4
on the differentiation of expansion (\ref{ske}) over time variable in the sense of the pointwise convergence
in the space $\mathfrak{L}^{1}(\mathcal{H})$ with a little modification.
Taking into account that for $n\geq1$ and $f_{n+1}\in\mathfrak{L}^{1}_{0}(\mathcal{H}_{n+1})$ the equality is true
\begin{eqnarray*}
   &&\lim\limits_{t\rightarrow 0}\big\|\frac{1}{t}\,\mathrm{Tr}_{2,\ldots,n+1}\,\mathfrak{A}_{1+n}(t,1,\ldots,n+1)f_{n+1}-
      \mathrm{Tr}_{2,\ldots,n+1}\,(-\mathcal{N}_{\mathrm{int}}^{(n+1)})(1,\ldots,n)f_{n+1}\big\|_{\mathfrak{L}^{1}(\mathcal{H})}=0,
\end{eqnarray*}
in the sense of the pointwise convergence in the space $\mathfrak{L}^{1}(\mathcal{H})$ we have
\begin{eqnarray}\label{den}
    &&\lim_{\triangle t\rightarrow 0}\frac{1}{\triangle t}\,\big(F_{1}(t+\triangle t,1)
      -F_{1}(t,1)\big)=-\mathcal{N}_{1}F_{1}(t)+\\
    &&+\sum\limits_{n=1}^{\infty}\frac{1}{n!} \,\mathrm{Tr}_{2,\ldots,n+1}(-\mathcal{N}_{\mathrm{int}}^{(n+1)})(1,\ldots,n+1)
      \sum\limits_{k=0}^{\infty}\frac{1}{k!}\,\mathrm{Tr}_{n+2,\ldots,{n+k+1}}\,\,\mathfrak{A}_{1+k}(t,\nonumber\\
    &&\{1,\dots,n+1\},n+2,\ldots,n+k+1)\prod _{i=1}^{n+k+1}F_{1}^0(i).\nonumber
\end{eqnarray}
In the second summand in the right-hand side of this equality we expand the reduced cumulants (\ref{ch})
of groups (\ref{groupG}) into transformed (\ref{rrrl3}) kinetic cluster expansions (\ref{rrrl2})
and represent series over the summation index $n$ and the sum over the summation index $k$ as the two-fold series.
Then, applying formula (\ref{prod2}) in case of $s=n+1$ to the obtained expression, from equality (\ref{den}) we derive
\begin{eqnarray}\label{denn}
    &&\sum\limits_{n=1}^{\infty}\frac{1}{n!} \,\mathrm{Tr}_{2,\ldots,n+1}(-\mathcal{N}_{\mathrm{int}}^{(n+1)})(1,\ldots,n+1)
      \sum\limits_{k=0}^{\infty}\frac{1}{k!}\,\mathrm{Tr}_{n+2,\ldots,{n+k+1}}\,\mathfrak{A}_{1+k}(t,\{1,\dots,n+1\},\\
    &&n+2\ldots,n+k+1)\prod _{i=1}^{n+k+1}F_{1}^0(i)=\sum\limits_{n=1}^{\infty}\sum _{k=1}^{n}
      \frac{1}{(n-k)!}\frac{1}{k!}\mathrm{Tr}_{2,\ldots,n+1}(-\mathcal{N}_{\mathrm{int}}^{(k+1)})(1,\nonumber\\
    &&\ldots,k+1)\,\mathfrak{V}_{1+n-k}(t,\{1,\ldots,k+1\},k+2,\ldots,n+1)\prod _{i=1}^{n+1}F_{1}(t,i).\nonumber
\end{eqnarray}
Under the condition: $\|F_1^0\|_{\mathfrak{L}^{1}(\mathcal{H})}<e^{-2}$, the series in the right-hand side
of this equality converges in the norm of the space $\mathfrak{L}^{1}(\mathcal{H})$.
Hence in view of equalities (\ref{den}) and (\ref{denn})
for $F_1^0\in\mathfrak{L}^{1}_{0}(\mathcal{H})$, we finally establish the validity of equality (\ref{sts}).

The proof that for arbitrary $F_1^0\in\mathfrak{L}^{1}(\mathcal{H})$
a weak solution of initial-value problem (\ref{gkeN})-(\ref{2N}) is given by expansion (\ref{ske})
is the subject of next section.
\end{proof}

\subsection{A weak solution}
Let us prove that in case of arbitrary initial data $F_1^{0}\in\mathfrak{L}^{1}(\mathcal{H})$
expansion (\ref{ske}) is a weak solution of the initial-value problem of the generalized quantum kinetic equation (\ref{gkeN}).
With this purpose we introduce the functional
\begin{eqnarray}\label{funcc}
    &&(f_{1},F_{1}(t))\doteq \mathrm{Tr}_{1}\,f_{1}(1)\,F_{1}(t,1),
\end{eqnarray}
where $f_{1}\in \mathfrak{L}_{0}(\mathcal{H})\subset\mathfrak{L}(\mathcal{H})$
is degenerate bounded operator with infinitely times differentiable
kernel with compact support and the operator $F_{1}(t)$ is defined by expansion (\ref{ske}).
According to estimate (\ref{essrBBGKY}), for $f_{1}\in \mathfrak{L}_{0}(\mathcal{H})$, functional (\ref{funcc})
exists and represents by the convergence series.

Using expansion (\ref{ske}), we transform  functional (\ref{funcc}) as follows
\begin{eqnarray*}\label{funk-gN}
    &&(f_{1},F_{1}(t))=\sum\limits_{n=0}^{\infty}\frac{1}{n!}\,\mathrm{Tr}_{1,\ldots,{1+n}}\,f_{1}(1)\,
        \mathfrak{A}_{1+n}(t,1,\ldots,n+1)\prod _{i=1}^{n+1} F_1^0(i)=\\ \nonumber
    &&=\sum\limits_{n=0}^{\infty}\frac{1}{n!}\,\mathrm{Tr}_{1,\ldots,{1+n}}\,\sum\limits_{k=0}^{n}(-1)^k\,
        \frac{n!}{k!(n-k)!}\,\mathcal{G}_{1+n-k}(t)f_{1}(1)\,\prod _{i=1}^{n+1} F_1^0(i),
\end{eqnarray*}
where the group of operators $\mathcal{G}_{1+n-k}(t)$ is adjoint to the group $\mathcal{G}_{1+n-k}(-t)$
in the sense of functional (\ref{funcc}). For $F_1^{0}\in\mathfrak{L}^{1}(\mathcal{H})$ and
$f_{1}\in \mathfrak{L}_{0}(\mathcal{H})$ considering (\ref{den}) the following equality holds
in the sense of the $\ast$-weak convergence \cite{BR} of the space $\mathfrak{L}(\mathcal{H})$
\begin{eqnarray}\label{d_funk-d}
    &&\lim\limits_{\triangle t\rightarrow0}\sum\limits_{n=0}^{\infty}\frac{1}{n!}\,
        \mathrm{Tr}_{1,\ldots,{1+n}}\,\sum\limits_{k=0}^{n}(-1)^k\,
        \frac{n!}{k!(n-k)!}\,\frac{1}{\triangle t}(\mathcal{G}_{1+n-k}(t+\triangle t)-\\
    &&-\mathcal{G}_{1+n-k}(t))f_{1}(1)\,\prod _{i=1}^{n+1} F_1^0(i)=\nonumber\\
    &&=(\mathcal{N}_{1}f_{1},F_{1}(t))+\sum\limits_{n=1}^{\infty}\frac{1}{n!}
       \,\mathrm{Tr}_{1,\ldots,n+1}\,\mathcal{N}_{\mathrm{int}}^{(n+1)}(1,\ldots,n+1)f_{1}(1)\,\times\nonumber\\
    &&\times\sum\limits_{k=0}^{\infty}\frac{1}{k!}\,\mathrm{Tr}_{n+2,\ldots,{n+k+1}}\,\,
       \mathfrak{A}_{1+k}(t,\{1,\dots,n+1\},n+2,\ldots,n+k+1)\prod _{i=1}^{n+k+1}F_{1}^0(i).\nonumber
\end{eqnarray}
For $F_1^{0}\in\mathfrak{L}^{1}(\mathcal{H})$ and bounded interaction potentials the limit functionals exist.
Using equality (\ref{denn}), we transform the second functional in (\ref{d_funk-d}) to the form
\begin{eqnarray}\label{d_funk-dn}
    &&\sum\limits_{n=1}^{\infty}\frac{1}{n!}\,\mathrm{Tr}_{1,\ldots,n+1}\,
       \mathcal{N}_{\mathrm{int}}^{(n+1)}f_{1}(1)\,\sum\limits_{k=0}^{\infty}\frac{1}{k!}\,
       \mathrm{Tr}_{n+2,\ldots,{n+k+1}}\,\,\mathfrak{A}_{1+k}(t)\prod _{i=1}^{n+k+1}F_{1}^0(i)=\\
   &&=\sum\limits_{n=1}^{\infty}\sum _{k=1}^{n}
       \frac{1}{(n-k)!}\frac{1}{k!}\,\mathrm{Tr}_{1,\ldots,n+1}\,\mathcal{N}_{\mathrm{int}}^{(k+1)}(1,\ldots,k+1)
       f_{1}(1)\,\mathfrak{V}_{1+n-k}(t)\prod _{i=1}^{n+1}F_{1}(t,i).\nonumber
\end{eqnarray}

Therefore as consequence of equalities (\ref{d_funk-d}),(\ref{d_funk-dn}), for functional (\ref{funcc}) we have
\begin{eqnarray}\label{d_funk-gNr}
    &&\frac{d}{dt}(f_{1},F_{1}(t))=(\mathcal{N}_{1}f_{1},F_{1}(t))+\\
    &&+\sum\limits_{n=1}^{\infty}\,\mathrm{Tr}_{1,\ldots,n+1}\,\sum _{k=1}^{n}
       \frac{1}{(n-k)!}\frac{1}{k!}\,\mathcal{N}_{\mathrm{int}}^{(k+1)}
       f_{1}(1)\,\mathfrak{V}_{1+n-k}(t)\prod _{i=1}^{n+1}F_{1}(t,i).\nonumber
\end{eqnarray}
Equality (\ref{d_funk-gNr}) means that expansion (\ref{ske}) for arbitrary $F_1^{0}\in\mathfrak{L}^{1}(\mathcal{H})$
is a weak solution of the Cauchy problem (\ref{gkeN})-(\ref{2N}).

For the Cauchy problem (\ref{gkeN})-(\ref{2N}) it can be introduced the notion of a weak solution in certain generalized sense.
Consider the functional
\begin{eqnarray}\label{func-g}
  &&\big(f,F(t\mid F_{1}(t))\big)\doteq \sum_{s=0}^{\infty}\,\frac{1}{s!}
      \,\mathrm{Tr}_{\mathrm{1,\ldots,s}}\,f_{s}\,F_{s}(t\mid F_{1}(t)),
\end{eqnarray}
where $f=(f_0,f_1,\ldots,f_n,\ldots)\in \mathfrak{L}_{0}(\mathcal{F}_\mathcal{H})\in\mathfrak{L}(\mathcal{F}_\mathcal{H})$
is a finite sequence of degenerate bounded operators \cite{Kato}
with infinitely times differentiable kernels with compact supports and elements of the sequence
$F\big(t,\mid F_{1}(t)\big)\doteq \big(F_{1}(t,1),F_{2}(t,1,2\mid F_{1}(t)),
\ldots,F_{s}\big(t,1,\ldots,s \mid F_{1}(t)\big),\ldots\big)$
are defined by formulas (\ref{ske}) and (\ref{f}) for the first and other elements correspondingly.
If for functional (\ref{func-g}) it is valid the equality
\begin{eqnarray}\label{w}
    &&\frac{d}{dt}\big(f,F(t\mid F_{1}(t))\big)=\big({\mathcal{B}}^{+}f,F(t\mid F_{1}(t))\big),
\end{eqnarray}
where ${\mathcal{B}}^{+}$ is the operator dual to the generator of the quantum BBGKY hierarchy \cite{BG}, i.e.
\begin{eqnarray*}\label{wg}
    &&({\mathcal{B}}^{+}f)_{s}(Y)\doteq \mathcal{N}_{s}(Y)f_{s}(Y)+\\
    &&+\sum\limits_{n=1}^{s}\frac{1}{n!}\sum\limits_{k=n+1}^s \frac{1}{(k-n)!}\sum_{j_1\neq\ldots\neq j_{k}=1}^s
        \mathcal{N}_{\mathrm{int}}^{(k)}(j_1,\ldots,j_{k})f_{s-n}(Y\backslash(j_1,\ldots,j_{n})),
\end{eqnarray*}
we are said to be that expansion (\ref{ske}) is a weak solution
of the Cauchy problem (\ref{gkeN})-(\ref{2N}) in extended meaning.

To verify this definition we transform functional (\ref{func-g}) as follows \cite{BG}
\begin{eqnarray*}
  &&\big(f,F(t\mid F_{1}(t))\big)= \sum_{s=0}^{\infty}\,\frac{1}{s!}
      \,\mathrm{Tr}_{\mathrm{1,\ldots,s}}\,\sum_{n=0}^s\,\frac{1}{(s-n)!}\sum_{j_1\neq\ldots\neq j_{s-n}=1}^s\,
      \sum\limits_{\substack{Z\subset Y\backslash (j_1,\ldots,j_{s-n})}}\,
      (-1)^{|Y\backslash (j_1,\ldots,j_{s-n})\backslash Z|}\\
  &&\times\mathcal{G}_{s-n+|Z|}(t,(j_1,\ldots,j_{s-n})\cup Z )\,
      f_{s-n}(j_1,\ldots,j_{s-n})\,\prod _{i=1}^{s}F_{1}^0(i),
\end{eqnarray*}
where ${\sum\limits}_{\substack{Z\subset Y\backslash (j_1,\ldots,j_{s-n})}}$
is a sum over all subsets $Z\subset Y\backslash (j_1,\ldots,j_{s-n})$ of the set $
Y\backslash (j_1,\ldots,j_{s-n})\subset(1,\ldots,s)$. For $F_1^{0}\in\mathfrak{L}^{1}(\mathcal{H})$
and bounded interaction potentials this functional exists.

Skipping the details, as a result for $f\in \mathfrak{L}_{0}(\mathcal{F}_\mathcal{H})$
the derivative of functional (\ref{func-g})
over the time variable in the sense of the $\ast$-weak convergence in the space
$\mathfrak{L}(\mathcal{F}_\mathcal{H})$ takes the form \cite{BG}
\begin{eqnarray}\label{d_funk-gN}
    &&\frac{d}{dt}\big(f,F(t\mid F_{1}(t))\big)=\sum\limits_{s=0}^{\infty}\frac{1}{s!}\,
       \mathrm{Tr}_{1,\ldots,{s}}\big(\mathcal{N}_{s}(Y)f_{s}(Y)+\sum\limits_{n=1}^{s}\frac{1}{n!}
       \sum\limits_{k=n+1}^s \frac{1}{(k-n)!}\times\\
    &&\times\sum_{j_1\neq\ldots\neq j_{k}=1}^s\mathcal{N}_{\mathrm{int}}^{(k)}(j_1,\ldots,j_{k})
       f_{s-n}(Y\backslash (j_1,\ldots,j_{n}))\big)\,F_{s}(t,Y\mid F_{1}(t)).\nonumber
\end{eqnarray}
In the sense of defined notion of a weak solution in extended meaning (\ref{w}) equality (\ref{d_funk-gN})
means that for arbitrary initial data $F_1^{0}\in\mathfrak{L}^{1}(\mathcal{H})$ a weak solution
of the initial-value problem of the generalized quantum kinetic equation (\ref{gkeN})
is determined by formula (\ref{ske}).

\section{Conclusions}
We demonstrate that in fact if initial data is completely defined by a one-particle density operator, then
all possible states of infinite-particle systems at arbitrary moment of time can be described within the framework
of a one-particle density operator without any approximations and explicitly defined functionals of this one-particle
density operator. One of the advantage of such approach is the possibility to construct the kinetic equations
in scaling limits if there are correlations of particle states at initial time \cite{CGP97}, for instance,
correlations characterizing the condensate states \cite{BQ}.

The specific quantum kinetic equations such as the Boltzmann equation and other ones, can be derived
from the constructed generalized quantum kinetic equation in the appropriate scaling limits or as a result of certain
approximations. For example, in the mean-field limit \cite{Sp91} (the case of scaled interaction potential
$\epsilon\,\Phi$, i.e. $\epsilon\,\mathcal{N}_{\mathrm{int}}$) we derive the quantum Vlasov equation
and for pure states the Hartree equation or the nonlinear Schr\"{o}dinger equation
(in case of a two-body interaction potential with the cubic
nonlinear term and for $n$-body interaction potential (\ref{gkeN}) with the $2n-1$ power nonlinear term).

Indeed, if there exists the following limit $f_{1}^0\in\mathfrak{L}^{1}(\mathcal{H}_{1})$ of initial data (\ref{2})
\begin{eqnarray}\label{0lim}
     &&\lim\limits_{\epsilon\rightarrow 0}\big\|\epsilon\,F_{1}^0-f_{1}^0
         \big\|_{\mathfrak{L}^{1}(\mathcal{H}_{1})}=0,
\end{eqnarray}
then for arbitrary finite time interval, there exists the limit
of solution (\ref{ske}) of the generalized quantum kinetic equation (\ref{gke})
\begin{eqnarray}\label{1lim}
     &&\lim\limits_{\epsilon\rightarrow 0} \big\|\epsilon \,F_{1}(t)-
         f_{1}(t)\big\|_{\mathfrak{L}^{1}(\mathcal{H}_{1})}=0,
\end{eqnarray}
where $f_{1}(t)$ is a strong solution of the Cauchy problem of the quantum Vlasov equation (\ref{Vlasov1})-(\ref{Vlasov2})
represented in the form of the following expansion
\begin{eqnarray}\label{viter}
     &&f_{1}(t,1)=\sum\limits_{n=0}^{\infty}\,\int\limits_0^tdt_{1}\ldots\int\limits_0^{t_{n-1}}dt_{n}\,
         \mathrm{Tr}_{\mathrm{s+1,\ldots,s+n}}\,\prod\limits_{j=1}^{s}\mathcal{G}_{1}(-t+t_{1},j)\times\\
     &&\times \sum\limits_{i_{1}=1}^{s}(-\mathcal{N}_{\mathrm{int}}(i_{1},s+1))
         \prod\limits_{j_1=1}^{s+1}\mathcal{G}_{1}(-t_{1}+t_{2},j_1)\ldots
         \prod\limits_{j_{n-1}=1}^{s+n-1}\mathcal{G}_{1}(-t_{n-1}+t_{n},j_{n-1})\times\nonumber\\
     &&\times\sum\limits_{i_{n}=1}^{s+n-1}(-\mathcal{N}_{\mathrm{int}}(i_{n},s+n))
         \prod\limits_{j_n=1}^{s+n}\mathcal{G}_{1}(-t_{n},j_n)\prod\limits_{i=1}^{s+n}f_{1}^0(i),\nonumber
\end{eqnarray}
and the operator $\mathcal{N}_{\mathrm{int}}$ is defined by formula (\ref{cint}).
For bounded interaction potentials series (\ref{viter}) converges for finite time interval \cite{Pe95}.

This statement is a consequence that, if $f_{s}\in\mathfrak{L}^{1}(\mathcal{H}_{s})$, then for arbitrary finite
time interval for the strongly continuous group (\ref{groupG}) it holds \cite{Kato}
\begin{eqnarray*}\label{lemma1}
    &&\lim\limits_{\epsilon\rightarrow 0}\big\|\mathcal{G}_{s}(-t)f_{s}-
         \prod\limits_{j=1}^{s}\mathcal{G}_{1}(-t,j)f_{s}\big\|_{\mathfrak{L}^{1}(\mathcal{H}_{s})}=0,
\end{eqnarray*}
and in general case the validity of the following equality
\begin{eqnarray*}\label{Duam2}
    &&\lim\limits_{\epsilon\rightarrow 0}\big\|\frac{1}{\epsilon^n}\,
        \mathfrak{A}_{1+n}(t,\{1,\ldots,s\},s+1,\ldots,s+n)f_{s+n}-\nonumber\\
    &&-\int\limits_0^tdt_{1}\ldots\int\limits_0^{t_{n-1}}dt_{n} \prod\limits_{j=1}^{s}\mathcal{G}_{1}(-t+t_{1},j)
        \sum\limits_{i_{1}=1}^{s}(-\mathcal{N}_{\mathrm{int}}(i_{1},s+1))
        \prod\limits_{j_1=1}^{s+1}\mathcal{G}_{1}(-t_{1}+t_{2},j_1)\ldots\nonumber\\
    &&\ldots\prod\limits_{j_{n-1}=1}^{s+n-1}\mathcal{G}_{1}(-t_{n-1}+t_{n},j_{n-1})
        \sum\limits_{i_{n}=1}^{s+n-1}(-\mathcal{N}_{\mathrm{int}}(i_{n},s+n))
        \prod\limits_{j_n=1}^{s+n}\mathcal{G}_{1}(-t_{n},j_n)f_{s+n}\big\|_{\mathfrak{L}^{1}(\mathcal{H}_{s+n})}=0.
\end{eqnarray*}

Then according to definition (\ref{skrr}) of the evolution operators
$\mathfrak{V}_{1+n}(t,\{1,\ldots,s\},s+1,\ldots,s+n),\, n\geq0,$ from expansion (\ref{f}),
taking into account an analog of the Duhamel equation for scattering operators
\begin{eqnarray*}
    &&\big(\widehat{\mathcal{G}}_{s}(t,1,\ldots,s)-I\big)f_s=\epsilon\int\limits_{0}^{t}d\tau
         \prod\limits_{l=1}^{s}\mathcal{G}_{1}(\tau,l)\big(-\sum\limits_{i<j=1}^{s}
         \mathcal{N}_{\mathrm{int}}(i,j)\big)\mathcal{G}_{s}(-\tau)f_s,
\end{eqnarray*}
and (\ref{dcum}), we establish formulas of an asymptotic perturbation of evolution operators (\ref{skrr})
\begin{eqnarray*}
    &&\lim\limits_{\epsilon\rightarrow 0}\big\|\big(\mathfrak{V}_{1}(t,\{1,\ldots,s\})-I\big)f_{s}
        \big\|_{\mathfrak{L}^{1}(\mathcal{H}_{s})}=0,
\end{eqnarray*}
and for $n\geq1$, correspondingly
\begin{eqnarray*}
    &&\lim\limits_{\epsilon\rightarrow 0}\big\|\frac{1}{\epsilon^n}\,
        \mathfrak{V}_{1+n}(t,\{1,\ldots,s\},s+1,\ldots,s+n)f_{s+n}\big\|_{\mathfrak{L}^{1}(\mathcal{H}_{s+n})}=0.
\end{eqnarray*}

Since a solution of initial-value problem (\ref{gke})-(\ref{2}) of the generalized quantum kinetic equation converges
to a solution of initial-value problem (\ref{Vlasov1})-(\ref{Vlasov2}) of the Vlasov quantum kinetic equation
as (\ref{0lim}),(\ref{1lim}), for functional (\ref{f}) for every $s\geq2$ it is true
\begin{eqnarray*}
    &&\lim\limits_{\epsilon\rightarrow 0} \big\|\epsilon^{s} F_{s}\big(t,1,\ldots,s \mid F_{1}(t)\big)-
        \prod\limits_{j=1}^{s}f_{1}(t,j)\big\|_{\mathfrak{L}^{1}(\mathcal{H}_{s})}=0,
\end{eqnarray*}
where $f_{1}(t)$ is defined by series (\ref{viter}) which converges for finite time interval, or
for marginal correlation functionals (\ref{cf}) it holds
\begin{eqnarray*}
    &&\lim\limits_{\epsilon\rightarrow 0} \big\|\epsilon^{s}
        G_{s}\big(t,1,\ldots,s\mid F_{1}(t)\big)\big\|_{\mathfrak{L}^{1}(\mathcal{H}_{s})}=0.
\end{eqnarray*}
The last equalities mean that in the mean-field scaling limit chaos property (\ref{h2}) preserves in time.

In the case of quantum systems of particles obeying Fermi or Bose statistics \cite{GP10}
the generalized quantum kinetic equation (\ref{gkeN}) and functionals (\ref{f}) have different structures.
The analysis of these cases will be given in a separate paper.

In the end it should be emphasized that a one-particle marginal density operator which belongs to the space
$\mathfrak{L}^{1}_{\alpha}(\mathcal{F}_\mathcal{H})$ describes only finitely many particles, i.e. systems
for which the average number of particles in a system is finite.
In order to describe the evolution of infinitely many particles we have to construct solutions for initial data
that belongs to more general Banach spaces than the space of trace class operators \cite{CGP97}.
For example, it can be the space of sequences of bounded operators containing the equilibrium states \cite{Pe95,Gin}.
In that case every term of the solution expansions of the quantum BBGKY hierarchy (\ref{1}) and correspondingly
of the generalized quantum kinetic equation (\ref{gke}) as well as
marginal functionals of the state (\ref{f}) contains the divergent traces \cite{CGP97,C68}.

\addcontentsline{toc}{section}{References}
\renewcommand{\refname}{References}


\begin{thebibliography}{99}
\bibitem{BQ}     M.M. Bogolyubov, \emph{Lectures on Quantum Statistics. Problems of Statistical Mechanics of Quantum Systems}.
                                       Rad. Shkola, 1949 (in Ukrainian).
\bibitem{CGP97}  C. Cercignani, V.I. Gerasimenko and D.Ya. Petrina, \emph{Many-Particle Dynamics and Kinetic Equations}.
                                       Kluwer Acad. Publ., 1997.
\bibitem{B46}    N.N. Bogolybov, \emph{Problems of a Dynamical Theory in Statistical Physics}. Gostekhizdat, 1946.
                                       (In: Studies in Statistical Mechanics, \textbf{1}, North-Holand Publ., 1962).
\bibitem{B47}    N.N. Bogolyubov and K.P. Gurov, \emph{Kinetic equations in quantum mechanics}. JETP, \textbf{17}, (1947), 614--628.
\bibitem{GM}     M.S. Green, \emph{Boltzmann equation from statistical mechanical point of view}.
                                       J. Chem. Phys.  \textbf{25}, (5), (1956), 836--855.
\bibitem{GP}     M.S. Green  and R.A. Piccirelly, \emph{Basis of the functional assumption in the theory of
                                       the Boltzmann equation}. Phys. Rev.  \textbf{132}, (3), (1963), 1388--1410.
\bibitem{PR}     R.A. Piccirelli, \emph{Some properties of the long-time value of the probability densities for moderately
                                       dense gases}. J. Math. Phys.  \textbf{7}, (1966), 922--934.
\bibitem{C68}    E.G.D. Cohen, \emph{The Kinetic Theory of Dence Gases}. (In: Fundamental Problem in Statistical
                                       Mechanics, \textbf{2}, North-Holand Publishing, 1968), 228--275.
\bibitem{CE}     E.G.D. Cohen, \emph{Bogolyubov and kinetic theory: the Bogolyubov equations}.
                                       Ukrainian J. Phys. {\bf 54}, (2009), 847--861.
\bibitem{CIP}    C. Cercignani, R. Illner and M. Pulvirenti, \emph{The Mathematical Theory of Dilute Gases}.
                                       Springer-Verlag, 1994.
\bibitem{GH}     H. Grad, \emph{Principles of the Kinetic Theory of Gases}.
                                       (In: Handbuch der Physik,  \textbf{12}, Springer, 1958), 205--294.
\bibitem{Sp80}   H. Spohn, \emph{Kinetic equations from Hamiltonian dynamics}.
                                       Rev. Mod. Phys. \textbf{52}, (3),(1980), 569-615.
\bibitem{Sp91}   H. Spohn, \emph{Large Scale Dynamics of Interacting Particles}.
                                       Springer-Verlag, 1991.
\bibitem{Sp07}   H. Spohn, \emph{Kinetic equations for quantum many-particle systems}. arXiv:0706.0807v1, (2007).
\bibitem{AA}     A. Arnold, \emph{Mathematical properties of quantum evolution equations.}
                                       Lect. Notes in Math. \textbf{1946}, Springer, 2008.
\bibitem{AGT}    R. Adami, F. Golse and A. Teta, \emph{Rigorous derivation of the cubic NLS in dimension one}.
                                       J. Stat. Phys. \textbf{127},  (6), (2007), 1193--1220.
\bibitem{BGGM1}  C. Bardos, F. Golse, A. Gottlieb and N. Mauser, \emph{Accuracy of the time-dependent Hartree-Fock
                                       approximation for uncorrelated initial states}.
                                       J. Stat. Phys. \textbf{115}, (2004), 1037--1055.  arXiv:quant-ph/0312005.
\bibitem{ESchY2} L. Erd\H{o}s, B. Schlein and H.-T. Yau, \emph{Derivation of the cubic non-linear Schr\"{o}odinger
                                       equation from quantum dynamics of many-body systems}.
                                       Invent. Math. \textbf{167}, (3), (2007),  515--614.
\bibitem{FL}     J. Fr\"{o}hlich, S. Graffi and S. Schwarz, \emph{Mean-field and classical limit of many-body
                                       Schr\"{o}dinger dynamics for bosons}.
                                       Commun. Math. Phys. \textbf{271}, (2007), 681--697. arXiv:math-ph/0603055.
\bibitem{M1}     A. Michelangeli,  \emph{Role of scaling limits in the rigorous analysis of Bose-Einstein condensation}.
                                       J. Math. Phys., \textbf{48}, (2007), 102102.
\bibitem{ESY10}  L. Erd\H{o}s, B. Schlein and H.-T. Yau, \emph{Derivation of the Gross-Pitaevskii equation for the dynamics
                                       of Bose-Einstein condensate}. Ann. Math. \textbf{172}, (2010), 291–-370. arXiv:math-ph/0606017.
\bibitem{ESch}   L. Erd\H{o}s and B. Schlein, \emph{Quantum dynamics with mean field interactions: a new approach}.
                                       J. Stat. Phys. \textbf{134}, (5),(2009), 859--870. arXiv:0804.3774.
\bibitem{GMM}    M. Grillakis, M. Machedon, and D. Margetis, \emph{Second-order corrections to mean field evolution
                                       of weakly interacting bosons}.  Comm. Math. Phys. \textbf{294}, (2010),  273-–301. arXiv:1003.4713.
\bibitem{GerUJP} V.I. Gerasimenko, \emph{Approaches to derivation of quantum kinetic equations}.
                                       Ukrainian J. Phys.  \textbf{54}, (8/9), (2009), 834--846. arXiv:0908.2797.
\bibitem{ESY}    L. Erd\H{o}s, M. Salmhofer and H.-T. Yau, \emph{On quantum Boltzmann equation.}
                                       J. Stat. Phys.  \textbf{116}, (116), (2004), 367--380.  arXiv:math-ph/0302034.
\bibitem{BCEP3}  D. Benedetto, F. Castella, R. Esposito and M. Pulvirenti, \emph{A short review on the derivation of the
                                       nonlinear quantum Boltzmann equations}. Commun. Math. Sci. \textbf{5}, (2007), 55--71.
\bibitem{Pe95}   D.Ya. Petrina, \emph{Mathematical Foundations of Quantum Statistical Mechanics. Continuous Systems}.
                                       Kluwer, 1995.
\bibitem{Kato}   T. Kato, \emph{Perturbation Theory for Linear Operators}. Springer-Verlag, 1995.
\bibitem{DauL_5} R. Dautray and J.L. Lions, \emph{ Mathematical Analysis and Numerical Methods for Science and Technology}.
                                       \textbf{5}, Springer-Verlag, 1992.
\bibitem{BR}     O. Bratelli and D.W. Robinson, \emph{Operator Algebras and Quantum Statistical Mechanics}.
                                       \textbf{1}, Springer-Verlag, 1979.
\bibitem{GOT}    V.I. Gerasimenko, \emph{Groups of operators for evolution equations of quantum many-particle systems}.
                                       Operator Theory: Adv. and Appl. \textbf{191}, (2009), 341--355. arXiv:0804.1153.
\bibitem{GP97}   V.I. Gerasimenko and D.Ya. Petrina, \emph{On the generalized kinetic equation}.
                                       Reports of NAS of Ukraine, \textbf{7}, (1997), 7--12.
\bibitem{GP98}   V.I. Gerasimenko and D.Ya. Petrina, \emph{The generalized kinetic equation generated by the BBGKY hierarchy}.
                                       Ukrainian J. Phys., \textbf{43}, (6/7), (1998), 697--702.
\bibitem{BGL}    G. Borgioli, V.I. Gerasimenko and G. Lauro, \emph{Derivation of a discrete Enskog equation from
                                       the dynamics of particles}. Rend. Sem. Mat. Univ. Pol. Torino. \textbf{56}, (2), (1998), 59--69.
\bibitem{ZhT}    Zh.A. Tsvir, \emph{Cluster expansions in theory of quantum kinetic equations}.
                                       Bulletin of Kyiv Nat. Univ., Math. and Mech. \textbf{23}, (2010), 25--30.
\bibitem{GerSh}  V.I. Gerasimenko and V.O. Shtyk, \emph{Evolution of correlations of quantum many-particle systems}.
                                       J. Stat. Mech., (3), (2008), P03007.  arXiv:0712.4336.
\bibitem{Gin}    J. Ginibre, \emph{Some Applications of Functional Integrations in Statistical Mechanics}.
                                       (In: Statistical Mechanics and Quantum Field Theory, Gordon and Breach, 1971), 329--427.
\bibitem{GP10}   V.I. Gerasimenko and D.O. Polishchuk, \emph{Dynamics of correlations of Bose and Fermi particles}.
                                       Math. Meth. Apl. Sci., DOI:10.1002/mma.1336, (2010). arXiv:1001.3893.
\bibitem{P10}    D.O. Polishchuk, \emph{BBGKY hierarchy and dynamics of correlations}.
                                       Ukrainian J. Phys.,  \textbf{55}, (5), (2010), 593--598.  arXiv:1002.1490.
\bibitem{CU}     G.E. Uhlenbeck and G.W. Ford, \emph{Lectures in Statistical Mechanics}. AMS, 1963.
\bibitem{BG}     G. Borgioli and V.I. Gerasimenko, \emph{Initial-value problem of quantum dual BBGKY hierarchy}.
                                       Nuovo Cimento, \textbf{33 C}, (1), (2010), 71--78.  arXiv:0806.1027.
\end{thebibliography}
\end{document}